\documentclass[preprint,12pt]{elsarticle}

\usepackage[T1]{fontenc}
\usepackage[utf8]{inputenc}
\usepackage{amsmath,amssymb,amsthm}
\usepackage{booktabs}
\usepackage{array}
\newcolumntype{L}[1]{>{\raggedright\arraybackslash}p{#1}}
\usepackage{multirow}
\usepackage{graphicx}
\usepackage{xcolor}
\usepackage{tikz}
\usetikzlibrary{positioning,arrows.meta,shapes.geometric,fit,backgrounds,calc}
\definecolor{crestblue}{HTML}{2A78D6}
\definecolor{crestorange}{HTML}{EB6834}
\usepackage{hyperref}

\newtheorem{proposition}{Proposition}
\journal{Vehicular Communications}

\begin{document}

\begin{frontmatter}

\title{From Trust Scores to Driver Decisions: An Abstaining Evidential Decision Layer for Event Validation in VANETs}

\author{Ipek Abas{\i}kele\c{s}-Turgut\corref{cor1}}
\ead{ipek.abasikeles@iste.edu.tr}
\cortext[cor1]{Corresponding author.}
\affiliation{organization={Computer Engineering Department, Iskenderun Technical University},
            city={Iskenderun, Hatay},
            postcode={31200},
            country={T\"urkiye}}

\begin{abstract}
Trust management in vehicular ad hoc networks is highly active, yet the decision that a driver needs, whether a reported event is real, has received far less attention; the schemes that make it force every decision into two outcomes, even on thin or conflicting evidence. This paper presents CREST, a vehicle-side event-decision framework that can withhold a decision its evidence does not support. Each vehicle weights the reports it holds by the trust, proximity and freshness of their senders, fuses them with Yager's rule, and decides only when the fused margin reaches an abstention threshold. Abstention is a waiting state: a vehicle still unsure at the event is resolved by an explicit application policy, which makes the trade-off between missed hazards and unnecessary detours visible. CREST requires only a scalar trust per reporter and is built here on an evidential trust model whose consensus and revocation steps are updated to support the decision layer. A single report is shown to change the decision margin by at most its weight, which bounds the influence of an attacker coalition by its trust. In a pre-registered simulation study with fourteen attacks, including position spoofing and trust-gated camouflage, and with all unspecified parameters of all systems tuned by the same procedure, CREST reduces the decision error of two published trust-based dissemination systems from 0.162 and 0.106 to 0.051. With the best-guess policy it also lowers both missed and unnecessary reroutes and outperforms the forced binary rule on both error types in every cell.
\end{abstract}

\begin{keyword}
Vehicular ad hoc networks \sep event validation \sep trust management \sep decision making \sep Dempster--Shafer theory \sep abstention
\end{keyword}

\end{frontmatter}

\section{Introduction}\label{sec:intro}

Safety applications in vehicular networks depend on what vehicles tell each other. A vehicle that reports an accident, a flooded road or a blocked lane allows the vehicles behind it to slow down or to change their route before they can see the hazard themselves. The same channel can be abused by insiders: a vehicle with valid credentials may hide a real hazard or announce one that does not exist, and the consequences range from congestion to accidents~\cite{fabi2022forest,tian2024aka}. Authentication schemes ensure that a message comes from a legitimate vehicle and has not been modified~\cite{zhang2022tapd,ghani2025vec,sawant2026qssnet}, but they cannot tell whether the reported event is real.

Trust management addresses this gap and has become one of the most active topics in vehicular security. Recent proposals evolve trust formulas with genetic programming~\cite{aslan2023dynamic}, organise trust evaluation in social communities~\cite{fabi2022forest}, store reputation in blockchains~\cite{shen2024cache,peng2024blockchain,han2024decentralized}, combine trust with incentive mechanisms~\cite{sharma2026pcv,han2024decentralized}, and fuse trust evidence with Dempster--Shafer theory~\cite{bhargava2022duel,ipek}. Most of these models end with a trust score: a value per vehicle or per message, a set of revoked vehicles, or a selected service provider~\cite{sharma2026pcv,gupta2025oteet,shen2024cache}. A driver, however, does not act on a trust score but on a decision about the event: is the reported hazard there or not. The step from the trust values of several, possibly conflicting reports to this decision receives far less attention than the trust computation itself. It is often assumed to be solved, as when trust evaluation starts ``once a true emergency situation is validated''~\cite{fabi2022forest}, and Sharma and Thakur describe an earlier trust model as one that ``focuses on trust observation and not on decision making''~\cite{sharma2026pcv}.

The schemes that do define an event decision share three limitations (Section~\ref{sec:related}). First, they force every decision into two outcomes, even when the evidence is thin or contradictory~\cite{tcemd,htemd,atwa2021rteam,arnda,zhang2026mdf}; one of them explicitly counts an undecided case as unacceptable~\cite{atwa2021rteam}, and another rejects waiting for confirmations as too slow~\cite{zhang2026mdf}. A forced decision converts the lack of evidence into false alarms or missed hazards and lets a single report tip a balanced case. Second, the reports are weighted by the proximity of their senders to the event, but the positions are self-declared, and position falsification is not evaluated~\cite{arnda,han2024decentralized,aslan2023dynamic}. Third, the evaluation stops at trust values or decision accuracy and does not show what a decision costs the driver, i.e.\ how many hazards are missed and how many detours are unnecessary.

This paper proposes CREST, a vehicle-side event-decision framework that allows a vehicle to withhold a decision that its evidence does not support. Each vehicle weights the reports it holds by the trust of their senders, by their proximity to the event and by their freshness, and fuses them with Yager's rule, which moves conflicting evidence to ignorance instead of renormalising it away. The vehicle decides only when the fused margin between the two hypotheses reaches an abstention threshold; otherwise it waits for more evidence. Abstention is thereby a waiting state, not an outcome: a vehicle that is still unsure when it reaches the event resolves the case by an explicit application policy, and the cost of remaining unsure is measured rather than hidden. CREST requires only a scalar trust value per reporter and is instantiated here on the evidential trust model IPEK~\cite{ipek}, whose consensus and revocation steps are updated to support the decision layer.

The contributions of this paper are as follows.
\begin{itemize}
  \item \textbf{An abstaining evidential decision framework} for event validation on the vehicle, with explicit policies for vehicles that remain unsure, so that the trade-off between missed hazards and unnecessary detours can be chosen by the application (Section~\ref{sec:crest}).
  \item \textbf{Analytical properties} that link the trust layer to the robustness of the decision: a single report changes the decision margin by at most its weight, so that reversing a correct decision requires a coalition whose trust-bounded weights reach twice the abstention threshold; the conditions under which an honest vehicle abstains; and the breakdown point of the credibility-weighted consensus (Section~\ref{sec:analysis}).
  \item \textbf{Two updates of the trust substrate} motivated by the decision layer: a credibility-weighted median consensus that keeps the attackers' trust low under local attacker majorities, and a time-based revocation rule whose delay does not depend on the update tempo (Section~\ref{sec:crest:substrate}).
  \item \textbf{An evaluation of complete systems} against HTEMD~\cite{htemd} and TCEMD~\cite{tcemd}, and of decision rules including AR-NDA~\cite{arnda} on a common trust substrate, under fourteen attacks, among them position spoofing and trust-gated camouflage, at two traffic densities and four decision geometries. The protocol was registered before the runs, all unspecified parameters of all systems were tuned with the same procedure, and the results are reported at the trust, decision and application levels (Sections~\ref{sec:method} and~\ref{sec:results}).
\end{itemize}

The remainder of the paper is organised as follows. Section~\ref{sec:related} reviews trust management and event decision schemes. Section~\ref{sec:system} defines the system, the decision geometry and the threat model. Section~\ref{sec:crest} presents CREST, and Section~\ref{sec:analysis} derives its analytical properties. Section~\ref{sec:method} describes the evaluation methodology, and Section~\ref{sec:results} the results. Section~\ref{sec:discussion} discusses limitations, and Section~\ref{sec:conclusion} concludes the paper.

\section{Related work}\label{sec:related}

\subsection{Trust management in vehicular networks}\label{sec:related:trust}

Trust in vehicular networks is used in at least three different senses. In authentication and key-agreement schemes, trust is the property of a credential or of the infrastructure: a receiver accepts a message when its signature, pseudonym and freshness are valid~\cite{zhang2022tapd,tian2024aka,ghani2025vec,sawant2026qssnet}. Such schemes establish who sent a message, but not whether its content is true, and a vehicle with valid credentials can still report a false event. Several of them take trust values as an external input~\cite{zhang2022tapd} or leave misbehaviour detection to future work~\cite{sawant2026qssnet}. In resource-oriented schemes, a reputation score selects service providers, such as parked computing vehicles~\cite{sharma2026pcv}, task-offloading servers~\cite{gupta2025oteet} or cache providers~\cite{shen2024cache}, by comparing the score with a threshold. In such schemes trust drives a selection rather than an assessment of content.

The third sense, which is the one addressed here, is trust in the content of event reports. Entity-centric models rate the senders, data-centric models rate the messages, and hybrid models combine both~\cite{tcemd,htemd,aslan2023dynamic}. Recent work has made the trust computation itself increasingly elaborate: genetic programming evolves the trust formula from many local evidences and adapts it when traffic conditions change~\cite{aslan2023dynamic}; community detection and a forest-fire spreading model select the vehicles that aggregate trust and relay emergency messages~\cite{fabi2022forest}; blockchains store reputation and game-theoretic incentives discourage false reports~\cite{peng2024blockchain,han2024decentralized}; and evidential models fuse feedback with Dempster--Shafer theory~\cite{bhargava2022duel,ipek}. The output of these models is a trust score or a set of revoked vehicles. How a receiving vehicle should turn the scores of several, possibly conflicting, reports into a decision about the event is often taken for granted. Fabi and Thampi, for instance, start the trust evaluation ``once a true emergency situation is validated''~\cite{fabi2022forest}, without specifying the validation.

IPEK~\cite{ipek}, the trust substrate used in this work, belongs to the evidential family. It scales rewards and penalties with event severity and location criticality, discounts feedback from reporters that deviate systematically from the consensus of witnesses, and fuses the feedback with Yager's rule into a global trust. Like most trust models, IPEK produces trust values and revocation decisions at the CA and does not decide on the vehicle whether an event is real.

\subsection{Event decision schemes}\label{sec:related:decision}

A smaller group of schemes defines an explicit decision about the event. Table~\ref{tab:related} compares them with CREST.

\paragraph{Trust-weighted voting}
TCEMD~\cite{tcemd} weights the $\pm 1$ opinion of each report by the CA-certified trust of its sender. A vehicle stores the messages it receives until it reaches a fixed decision distance and then follows the sign of the weighted vote, even on a single message; low-confidence results are acted upon but not rebroadcast. HTEMD~\cite{htemd} maps each message to an event probability from the hybrid certificate and direct trust of the sender, shrinks the mean probability towards 0.5 in proportion to the entropy of the message set, and compares the result with a threshold. Han et al.~\cite{han2024decentralized} accept a road-event message when a weighted sum of sender reputation, an exponential decay with the distance to the event and the similarity to already accepted messages exceeds a threshold. In all three schemes the decision is a threshold on a trust-weighted score. Freshness is at most a validity cut-off, and uncertainty is either not represented or, in HTEMD, represented as a neutral score that is still thresholded. A consistent colluding majority lowers the entropy in HTEMD instead of raising it.

\paragraph{Risk- and learning-based rules}
RTEAM~\cite{atwa2021rteam} estimates, from trust-, hop-, time- and location-weighted report mass, the likelihood that the event is real and takes the action with the lowest Bayes risk, with error costs that depend on the time to reach the event and on the number of following vehicles. RTEAM is one of the few schemes with asymmetric decision costs, but it regards the case ``where a vehicle could not determine the event state'' as ``unacceptable'' and is designed to decide in all cases. MDF-TDM~\cite{zhang2026mdf} decides on each event message from the reputation of the sender, two feedback-driven risk scores and a context-predicted prior probability of the event, and explicitly argues against waiting for confirmations, since ``waiting \ldots\ may result in significant delays''. Aslan and Sen~\cite{aslan2023dynamic} evolve the data-trust formula with genetic programming over 18 local evidences, including proximity, recency and the share of vehicles reporting the same event, and accept or reject each message by a threshold. These rules use richer evidence than voting, but each message or event is still forced into one of two classes, and the cases with thin evidence, which Aslan and Sen identify as the hardest~\cite{aslan2023dynamic}, are decided like all others.

\paragraph{Evidential rules}
Dempster--Shafer theory can represent the absence of evidence explicitly as mass on the whole frame. Peng et al.~\cite{peng2024blockchain} fuse reputation-weighted votes with Dempster's rule in a hierarchical blockchain, but assign mass only to singleton hypotheses, so that no ignorance is represented, and select the level with the largest belief. DUEL~\cite{bhargava2022duel} introduces an explicit uncertainty level into the trust opinion and divides its decision ratio by the uncertainty mass, so that a receiver may act at once, verify with neighbours or discard a message; the uncertainty, however, concerns the sender rather than the event, and under high uncertainty the receiver ``is bound to react according to the majority''. AR-NDA~\cite{arnda} is the most complete evidential scheme. The RSU weights each opinion by a fuzzy trustworthiness of reputation and spatio-temporal proximity, fuses the opinions with Dempster's rule, and switches to a vote-count and role-aware inference when the fused masses differ by less than a tolerance. The uncertainty thus changes the rule but never the binary form of the output; the decision is centralised at the RSU, and the authors state that they do not consider location forgery, although proximity is an input of the trust.

\subsection{Abstention and robustness}\label{sec:related:abstention}

The option to withhold a decision is classical in pattern recognition. Chow~\cite{chow1970} showed that a classifier that rejects inputs whose posterior falls below a threshold trades a lower error rate on the accepted inputs against the rate of rejected ones. Two properties distinguish the vehicular setting. First, the evidence of a vehicle approaching an event grows with time, so that a withheld decision can be taken later on more evidence; abstention is a waiting state rather than a final outcome. Second, the vehicle must eventually act, so that the cost of remaining unsure has to be counted as a missed hazard or as a detour. The schemes of Section~\ref{sec:related:decision} wait at most for a fixed distance or time window and then force a decision~\cite{tcemd,htemd,aslan2023dynamic}; none of them defers a decision because its evidence is insufficient, and none reports what deferral costs at the application level.

Two further aspects concern robustness. The decision schemes that weight reports by proximity~\cite{atwa2021rteam,han2024decentralized,arnda,aslan2023dynamic} rely on positions declared by the reporters; position falsification is a well-known misbehaviour in cooperative intelligent transportation systems~\cite{vanderheijden2019survey}, but none of the schemes evaluates it, and AR-NDA excludes it explicitly~\cite{arnda}. In addition, consensus-based filters such as the median of IPEK~\cite{ipek} or the outlier filter of HTEMD~\cite{htemd} break down once colluding attackers form a local majority among the reporters of a target~\cite{huber2009}.

\begin{table}[t]
\centering
\caption{Comparison of event decision schemes. Site: where the event decision is taken (V: vehicle, R: RSU, B: blockchain miners). Weights of the reports: T trust, P proximity, F freshness. Unc.: uncertainty represented and used in the decision. Abst.: the decision can be deferred because of insufficient evidence. Rev.: revocation of attackers. Attacks evaluated: Hid. hiding true events, Fab. fabricated events, Col. collusion, Pos. position falsification. Metrics: Dec. decision-level, App. application-level. \checkmark: yes; $\sim$: partly; --: no.}
\label{tab:related}
\scriptsize
\setlength{\tabcolsep}{3pt}
\begin{tabular}{@{}lcccccccccccccc@{}}
\toprule
& & & \multicolumn{3}{c}{Weights} & & & & \multicolumn{4}{c}{Attacks} & \multicolumn{2}{c}{Metrics} \\
\cmidrule(lr){4-6}\cmidrule(lr){10-13}\cmidrule(l){14-15}
Scheme & Year & Site & T & P & F & Unc. & Abst. & Rev. & Hid. & Fab. & Col. & Pos. & Dec. & App. \\
\midrule
TCEMD~\cite{tcemd}                     & 2020 & V & \checkmark & --         & $\sim$     & --         & -- & \checkmark & \checkmark & \checkmark & \checkmark & -- & \checkmark & -- \\
RTEAM~\cite{atwa2021rteam}             & 2021 & V & \checkmark & \checkmark & \checkmark & --         & -- & --         & \checkmark & --         & --         & -- & \checkmark & -- \\
DUEL~\cite{bhargava2022duel}           & 2022 & V & \checkmark & --         & $\sim$     & $\sim$     & -- & --         & \checkmark & \checkmark & --         & -- & --         & -- \\
Aslan \& Sen~\cite{aslan2023dynamic}   & 2023 & V & \checkmark & \checkmark & \checkmark & --         & -- & --         & --         & \checkmark & --         & -- & \checkmark & -- \\
HTEMD~\cite{htemd}                     & 2024 & V & \checkmark & --         & --         & $\sim$     & -- & \checkmark & --         & \checkmark & \checkmark & -- & \checkmark & -- \\
Peng et al.~\cite{peng2024blockchain}  & 2024 & B & \checkmark & --         & --         & --         & -- & --         & --         & \checkmark & --         & -- & \checkmark & -- \\
Han et al.~\cite{han2024decentralized} & 2024 & V & \checkmark & \checkmark & $\sim$     & --         & -- & \checkmark & --         & \checkmark & --         & -- & $\sim$     & -- \\
AR-NDA~\cite{arnda}                    & 2026 & R & \checkmark & \checkmark & \checkmark & \checkmark & -- & --         & \checkmark & \checkmark & \checkmark & -- & \checkmark & -- \\
MDF-TDM~\cite{zhang2026mdf}            & 2026 & V & \checkmark & --         & --         & --         & -- & --         & --         & \checkmark & \checkmark & -- & \checkmark & -- \\
\midrule
CREST (this work)                      & --   & V & \checkmark & \checkmark & \checkmark & \checkmark & \checkmark & \checkmark & \checkmark & \checkmark & \checkmark & \checkmark & \checkmark & \checkmark \\
\bottomrule
\end{tabular}
\end{table}

\subsection{Positioning}\label{sec:related:position}

Table~\ref{tab:related} shows a consistent pattern. Every scheme that decides about an event forces the decision into two outcomes. Explicit uncertainty, where present, either concerns the sender~\cite{bhargava2022duel} or selects a fallback rule~\cite{arnda}. Proximity weights rest on unverified positions, and the evaluation stops at decision accuracy or trust values without measuring what the decision costs the driver. CREST addresses these gaps on the vehicle side: it represents the lack of evidence explicitly, defers decisions that the evidence does not support, makes the cost of remaining unsure visible through explicit application policies, bounds the influence of an attacker coalition analytically, and is evaluated against complete competing systems at the trust, decision and application levels, including position spoofing and trust-gated camouflage.

\section{System and threat model}\label{sec:system}

CREST is a vehicle-side decision framework that determines whether a reported event is genuine from the reports received by the vehicle and the trust of their senders. The framework does not define a trust model of its own; in common with decision schemes that rely on an external reputation source~\cite{arnda}, it requires only a scalar trust value in $[0,1]$ for each reporter. In the present work, this trust value is provided by IPEK~\cite{ipek}, whose system, event and perception models are adopted without modification. This section summarises the elements of the trust substrate on which the decision layer depends and defines the decision geometry and the threat model. The values of all parameters are listed in Tables~\ref{tab:params} and~\ref{tab:params-baselines} (Section~\ref{sec:method:setup}).

\subsection{Network, events and perception}\label{sec:system:net}

The system consists of vehicles, road-side units (RSUs) and a central authority (CA). The CA computes the global trust of every vehicle and the revocation list; RSUs relay feedback to the CA and the trust state back to the vehicles. The CA and the RSUs are trusted. Vehicles have unique identities and know their own positions. Message delivery within the broadcast range $r_{\mathrm{b}}$ is assumed reliable, so that the evaluation isolates the trust and decision logic from channel effects.

$N_E$ traffic events are placed at fixed, randomly drawn locations, equally many in each quadrant of the map. Each event $e$ has an event severity $S_E$ and a location criticality $S_L$, combined into the criticality factor $CF = S_E + S_L - S_E S_L$. Events belong to three classes: non-critical (Class~1, both dimensions low), mixed (Class~2, one dimension high) and critical (Class~3, both dimensions high). An event is active for $D_e = D_0(1 + CF)$, then passive, and it reactivates at the same location after a class-dependent random delay. The location criticality is static map information and is therefore known to every vehicle; the event severity and the current state are known only to vehicles that observe the event.

Every event has an impact radius $R$. A vehicle within $R$ is a \emph{witness}: it perceives the state of the event with error probability
\begin{equation}\label{eq:perr}
  p_{\mathrm{err}}(d) = p_0 \, \frac{d}{R},
\end{equation}
where $d$ is its distance from the event and $p_0$ the base error probability, and it broadcasts an event report every $T_{\mathrm{rep}}$ while it remains within $R$. A report carries the sender identity, the event identifier, the reported state (active or not active), the sender position and a timestamp. A receiver keeps only the most recent report of each sender per event, discards reports of revoked senders, reports received farther than the acceptance distance $D_{\mathrm{acc}}$ from the event, and reports older than $D_e/2$.

\subsection{Trust substrate}\label{sec:system:trust}

IPEK~\cite{ipek} computes trust in two steps. A witness $V_j$ compares each recorded report of a reporter $V_i$ with its own perception and updates a local trust $LT_{ji} \in [0,1]$. Agreement is rewarded with rate $r^{+} = \mu\,\omega$, where $\omega = \omega_E S_E + \omega_L S_L$; disagreement is penalised with rate $r^{-} = \min(1, \kappa\,r^{+}\,CF)\,\gamma_{ji}$, where the penalty ratio $\kappa > 1$ makes trust harder to gain than to lose and the observation-reliability factor $\gamma_{ji}$ discounts disagreements that may be caused by perception errors. The local trust values are sent to the CA as feedback.

At the end of each round of length $\Delta$, the CA first updates the credibility $c_j$ of every reporter from its mean absolute deviation from a per-target consensus of the feedback, where the deviation $d_0$ maps to zero credibility and $\rho$ is the smoothing factor. It then converts each feedback item into a simple mass function over \{trusted, risky\}, discounted by $w_j = GT_j\,c_j$, fuses the items of each target sequentially with Yager's rule~\cite{yager1987}, combines the result with the previous state, applies risk accentuation, and obtains the scalar global trust $GT_i$ by the pignistic transformation. A vehicle whose global trust stays below a detection threshold $\theta$ is revoked; IPEK uses $N_r$ consecutive rounds for this test.

The decision layer of CREST adopts the global trust $GT_i$ as the trust of reporter $V_i$ (Section~\ref{sec:crest:decision}). Two steps of the substrate, namely the consensus used in the credibility test and the revocation rule, are updated in this work (Section~\ref{sec:crest:substrate}); all remaining components and parameters of IPEK are retained as published~\cite{ipek}. The trust-layer parameters were re-tuned under the fairness protocol of Section~\ref{sec:method}, which applies the same procedure to every compared system. The tuning retained the published penalty ratio $\kappa$ and changed the two credibility parameters $d_0$ and $\rho$, which IPEK had fixed without tuning (Table~\ref{tab:params}).

\subsection{Decision geometry}\label{sec:system:geometry}

A trust substrate such as IPEK serves the CA by determining which vehicles are trustworthy and which are to be revoked. A driver, in contrast, requires a decision about the event before reaching it, and this decision is made on the vehicle side. A vehicle approaching an event becomes a \emph{decider} when it enters the \emph{decision ring} $R < d \le \beta R$, where $\beta > 1$ is the ring ratio; its task is to decide whether the event is active before it enters the impact radius, where it would observe the event itself (Fig.~\ref{fig:geometry}).

\begin{figure}[t]
\centering
\begin{tikzpicture}[
    font=\footnotesize,
    veh/.style={line width=0.5pt, minimum width=7mm, minimum height=3.2mm, rounded corners=1pt, inner sep=0pt},
    wit/.style={veh, draw=black!60, fill=black!20},
    decd/.style={veh, draw=crestblue, fill=crestblue!30},
    lis/.style={veh, draw=black!35, fill=black!8},
    lab/.style={font=\scriptsize, text=black!80},
    rep/.style={-{Stealth[length=1.6mm]}, line width=0.5pt, draw=crestblue!80},
  ]
  \fill[black!6] (-7.2,-0.32) rectangle (4.6,0.32);
  \draw[black!35, dash pattern=on 3pt off 3pt] (-7.2,0) -- (4.6,0);
  \draw[black!35, dashed, line width=0.4pt] (-6.58,-2.39) arc[start angle=200, end angle=160, radius=7.0];
  \node[lab, align=center] at (-6.15,2.85) {acceptance\\distance $D_{\mathrm{acc}}$};
  \fill[crestblue!10, even odd rule] (0,0) circle (3.3) (0,0) circle (2.0);
  \draw[crestblue!70, line width=0.5pt] (0,0) circle (3.3);
  \draw[black!55, line width=0.6pt] (0,0) circle (2.0);
  \node[star, star points=7, star point ratio=0.55, draw=crestorange!90!black, fill=crestorange!70, minimum size=5mm, inner sep=0pt] (ev) at (0,0) {};
  \node[lab, below=1mm of ev] {event};
  \draw[{Stealth[length=1.4mm]}-{Stealth[length=1.4mm]}, line width=0.4pt, black!70] (0,-0.05) -- node[lab, left, pos=0.55] {$R$} (-1.25,-1.56);
  \draw[{Stealth[length=1.4mm]}-{Stealth[length=1.4mm]}, line width=0.4pt, crestblue!80] (0.05,0.05) -- node[lab, right, pos=0.78] {$\beta R$} (2.03,2.60);
  \node[lab, align=center] at (0.95,-1.5) {impact radius\\(witnesses)};
  \node[lab, align=center, text=crestblue!90!black] at (0,-2.65) {decision ring\\$R < d \le \beta R$};
  \node[wit] (w1) at (0.95,0.16) {};
  \node[wit] (w2) at (-0.7,-0.16) {};
  \node[wit] (w3) at (0.3,1.15) {};
  \node[decd] (d) at (-2.65,0.16) {};
  \node[lab, above=0.6mm of d, text=crestblue!90!black] {decider};
  \node[lis] (l) at (-5.0,-0.16) {};
  \node[lab, below=0.8mm of l, align=center] {receives and\\stores reports};
  \draw[rep] (w2.west) to[bend right=12] (d.east);
  \draw[rep] (w3.west) to[bend right=18] (d.north east);
  \draw[rep] (w1.south west) to[bend left=22] (d.south east);
  \draw[rep, draw=black!30] (w2.south west) to[bend left=14] (l.east);
  \node[lab, text=crestblue!90!black] at (-1.25,0.95) {reports};
  \draw[-{Stealth[length=1.8mm]}, line width=0.6pt, black!60] (-4.05,0.55) -- (-3.4,0.55);
\end{tikzpicture}
\caption{Decision geometry (schematic, not to scale). Witnesses inside the impact radius $R$ observe the event and broadcast reports; a vehicle in the decision ring $R < d \le \beta R$ decides from the reports it holds before it reaches the event. Vehicles farther away store reports up to the acceptance distance $D_{\mathrm{acc}}$. The schematic was drawn with code prepared with the assistance of an AI tool (see the declaration of generative AI use).}
\label{fig:geometry}
\end{figure}
The decider bases its decision on the reports it holds for the event (Section~\ref{sec:system:net}). Optionally, a vehicle that has reached a firm decision forwards the original reports it used, unchanged and with their original timestamps and positions, for up to $h$ hops; forwarded reports are used for decisions only and never as trust evidence. The main evaluation uses no relaying ($h = 0$); a larger ring ratio $\beta_{\mathrm{L}}$ and relaying over $h_{\max}$ hops are evaluated as a sensitivity analysis.

A decider that leaves the decision ring without having received any report about the event is \emph{uninformed}. Uninformed vehicles occur in every compared system, are a property of the geometry and the traffic density, and do not change their route in any system.

\subsection{Threat model}\label{sec:system:threat}

The adversary is an insider with a valid identity that follows the protocol in form and falsifies its content. Each vehicle becomes an attacker independently with probability $a \in \mathcal{A}$, and all attackers know each other and collude. An attacker acts on up to three channels: the \emph{report} channel (the state it broadcasts), the \emph{feedback} channel (the local trust values it sends to the CA, set to the extreme values 0 for honest targets and 1 for accomplices) and, newly in this work, the \emph{position} channel (the position it declares in its reports). Outside its trigger condition the attacker behaves honestly and accumulates reputation.

Table~\ref{tab:attacks} lists the fourteen attacks. The eight attacks of IPEK combine four triggers with two channel configurations~\cite{ipek}. The severity trigger attacks events with $S_E \ge \theta_E$, the location trigger events with $S_L \ge \theta_L$, the on-off trigger attacks in the first half of every period $T_{\mathrm{oo}}$, and the continuous trigger attacks at every opportunity; in these attacks an attacker reports an active event as not active. CREST adds three families that target the decision layer:
\begin{itemize}
  \item \textbf{Fabrication.} An attacker that witnesses a passive event reports it as active, always (Fab) or within the on-off window (Fab-OnOff). This attack targets false alarms, the error type that hiding attacks do not produce.
  \item \textbf{Position spoofing.} The attacker hides active severe events as in R-Sev, but declares a position drawn uniformly from a disk of radius $r_{\mathrm{s}}$ around the event centre, so that its report appears to come from the most reliable location. In addition, a spoofing attacker that is not a witness but is inside the decision ring sends the same false report \emph{remotely}. The attack exploits any decision rule that rewards proximity.
  \item \textbf{Trust-gated camouflage.} The attacker uses the severity trigger but attacks only while its own global trust satisfies $GT \ge \theta + \delta$ for a margin $\delta > 0$; otherwise it behaves honestly to recover. This attacker knows the detection threshold $\theta$ and stays above it by design, which makes it the strongest adversary for any threshold-based revocation.
\end{itemize}
Sybil identities, external attackers, silent attackers and attacks on the CA or the RSUs are outside the scope of this work.

\begin{table}[t]
\centering
\caption{Attacks used in the evaluation. Channels: R = report, F = feedback, P = position.}
\label{tab:attacks}
\small
\begin{tabular}{@{}L{3.2cm}L{1.6cm}L{3.3cm}L{4.2cm}@{}}
\toprule
Attack & Channels & Trigger & Falsification \\
\midrule
D-Sev, D-Loc, D-OnOff, D-Cont & R + F & severity / location / on-off / continuous & active reported as not active \\
R-Sev, R-Loc, R-OnOff, R-Cont & R & severity / location / on-off / continuous & active reported as not active \\
Fab, Fab-OnOff & R & passive event / on-off & passive reported as active \\
Spoof, Spoof-OnOff & R + P & severity / on-off & as R-Sev, position near the centre, remote reports \\
Camo-R, Camo-D & R / R + F & severity and $GT \ge \theta + \delta$ & as R-Sev / D-Sev \\
\bottomrule
\end{tabular}
\end{table}

\section{CREST}\label{sec:crest}

CREST is an event-decision framework executed on every vehicle. Its core is an abstaining evidential decision rule (Section~\ref{sec:crest:decision}), under which a vehicle fuses the reports it holds, weighted by the trust of their senders, commits to a firm decision only when the accumulated evidence is sufficiently strong, and otherwise defers the decision until further evidence arrives. Since the rule requires only a scalar trust value for each reporter, it is not tied to a particular trust model; in this study, the trust values are supplied by IPEK.

The robustness of the decision nevertheless depends on the quality of the trust substrate: according to Proposition~\ref{prop:reverse}, the coalition required to reverse a decision grows as the trust of the attackers decreases. For this reason, two steps of the substrate are additionally updated (Section~\ref{sec:crest:substrate}). The first is the consensus of the credibility test, which breaks down under local attacker majorities and thereby allows the trust of the attackers to rise; the second is the revocation rule, whose delay depends on the update tempo. These updates apply to IPEK rather than to the decision framework and are therefore evaluated separately through ablation. An optional plausibility check, which is not part of the CREST core, is described in Section~\ref{sec:crest:plaus}. Fig.~\ref{fig:architecture} gives an overview of the framework and of its interface to the trust substrate.

\begin{figure}[t]
\centering
\resizebox{\linewidth}{!}{\begin{tikzpicture}[
    font=\scriptsize,
    dbox/.style={draw=crestblue, fill=crestblue!8, line width=0.5pt, rounded corners=2pt, align=center, minimum height=9mm, text width=21mm, inner sep=2pt},
    sbox/.style={draw=black!55, fill=black!4, line width=0.5pt, rounded corners=2pt, align=center, minimum height=9mm, text width=23mm, inner sep=2pt},
    ubox/.style={sbox, dashed, draw=crestorange!90!black, fill=crestorange!8},
    dec/.style={draw=crestblue, fill=white, line width=0.5pt, diamond, aspect=2.0, align=center, inner sep=0.5pt},
    arr/.style={-{Stealth[length=1.5mm]}, line width=0.5pt, draw=black!70},
    lab/.style={font=\scriptsize, text=black!70},
    title/.style={font=\footnotesize\bfseries, anchor=north west},
  ]
  \node[dbox] (rep) at (0,0)   {Reports held\\$(s_i, t_i, \mathbf{x}_i)$};
  \node[dbox] (w)   at (2.7,0) {Weights\\$w_i = GT_i\,\pi_i\,\phi_i$};
  \node[dbox] (fus) at (5.4,0) {Yager fusion\\$g = M^{\mathrm{A}} - M^{\mathrm{N}}$};
  \node[dec]  (q)   at (7.85,0) {$|g| \ge \tau$};
  \node[dbox] (firm) at (10.4,0.85)  {Firm decision:\\reroute or keep route};
  \node[dbox] (def)  at (10.4,-0.85) {Defer};
  \node[dbox, text width=31mm] (pol) at (10.4,-2.45) {Unsure on leaving\\the ring: no action,\\CREST+T or CREST+I};
  \draw[arr] (rep) -- (w);
  \draw[arr] (w) -- (fus);
  \draw[arr] (fus) -- (q);
  \draw[arr] (q.north) |- node[pos=0.3, left, lab] {yes} (firm.west);
  \draw[arr] (q.south) |- node[pos=0.3, left, lab] {no} (def.west);
  \draw[arr, dashed] ([yshift=-2.5mm]def.west) -| node[pos=0.3, below, lab] {re-evaluate every $T_{\mathrm{u}}$} (fus.south);
  \draw[arr] (def) -- (pol);
  \node[draw=crestblue!60, line width=0.5pt, rounded corners=3pt, fit=(rep)(firm)(pol), inner xsep=3mm, inner ysep=3mm, yshift=2.5mm] (veh) {};
  \node[title, text=crestblue!90!black] at (veh.north west) {Decision framework (every vehicle)};

  \node[sbox] (fb)   at (0,-4.9)    {Feedback $LT_{ji}$\\from witnesses};
  \node[ubox] (cred) at (2.85,-4.9) {Credibility $c_j$:\\weighted-median\\consensus};
  \node[sbox] (gt)   at (5.7,-4.9)  {Global trust $GT_i$:\\Yager, history,\\risk accentuation};
  \node[ubox] (rev)  at (8.55,-4.9) {Revocation:\\time rule ($W$, $\theta$)};
  \draw[arr] (fb) -- (cred);
  \draw[arr] (cred) -- (gt);
  \draw[arr] (gt) -- (rev);
  \node[draw=black!45, line width=0.5pt, rounded corners=3pt, fit=(fb)(rev), inner xsep=3mm, inner ysep=3mm, yshift=2.5mm] (ca) {};
  \node[title, text=black!75] at (ca.north west) {Trust substrate at the CA (IPEK)};

  \draw[arr, line width=0.7pt, draw=crestblue!90!black] (ca.north -| gt) -- node[right, lab, text=crestblue!90!black] {$GT_i$, revocation list} (veh.south -| gt);
  \draw[arr, draw=black!50] (veh.south -| rep) -- node[right, lab, align=left] {feedback of\\witnesses} (ca.north -| rep);

  \node[ubox, minimum height=3mm, text width=5mm, anchor=north west] (lg) at ([yshift=-2.5mm]ca.south west) {};
  \node[anchor=west, lab] at (lg.east) {update of the trust substrate in this work};
\end{tikzpicture}}
\caption{Overview of CREST. On every vehicle, the reports held about an event are weighted by the trust of their senders, by proximity and by freshness and fused with Yager's rule; a firm decision is taken only when the margin reaches the abstention threshold $\tau$, otherwise the decision is deferred. The trust substrate at the CA supplies the global trust values and the revocation list; the two dashed components are the updates of the substrate introduced in this work. The schematic was drawn with code prepared with the assistance of an AI tool (see the declaration of generative AI use).}
\label{fig:architecture}
\end{figure}

\subsection{Abstaining evidential decision layer}\label{sec:crest:decision}

\paragraph{Report weights}
Let a decider hold the reports $\{(s_i, t_i, \mathbf{x}_i)\}_{i=1}^{n}$ about event $e$, where $s_i \in \{\mathrm{A}, \mathrm{N}\}$ is the reported state (active or not active), $t_i$ the timestamp and $\mathbf{x}_i$ the declared reporter position. Each report receives the weight
\begin{equation}\label{eq:weight}
\begin{gathered}
  w_i = GT_i \cdot \pi_i \cdot \phi_i, \\
  \pi_i = \max\!\left(0,\, 1 - \frac{\lVert \mathbf{x}_i - \mathbf{x}_e \rVert}{R}\right), \qquad
  \phi_i = \max\!\left(0,\, 1 - \frac{t - t_i}{D_e/2}\right),
\end{gathered}
\end{equation}
where $GT_i$ is the current global trust of the reporter as broadcast by the CA, $\mathbf{x}_e$ the event location and $t$ the current time. The proximity factor $\pi_i$ follows the perception model~\eqref{eq:perr}: a report from the edge of the impact radius is the least reliable. The freshness factor $\phi_i$ decreases to zero at the report-age limit $D_e/2$ (Section~\ref{sec:system:net}), so that a report about to expire contributes little. All three factors lie in $[0,1]$, hence $w_i \in [0,1]$.

\paragraph{Evidential fusion}
Each report is a simple support function on the frame $\{\mathrm{A}, \mathrm{N}\}$: it assigns mass $w_i$ to the reported state and $1 - w_i$ to the whole frame $\Theta$, which represents ignorance. The reports are fused sequentially with Yager's rule~\cite{yager1987}, in the order in which their senders were first heard (a newer report of the same sender replaces the older one in place). For two mass functions $m_1, m_2$ with components $(m^{\mathrm{A}}, m^{\mathrm{N}}, m^{\Theta})$,
\begin{equation}\label{eq:yager}
\begin{aligned}
  m^{\mathrm{A}} &= m_1^{\mathrm{A}} m_2^{\mathrm{A}} + m_1^{\mathrm{A}} m_2^{\Theta} + m_1^{\Theta} m_2^{\mathrm{A}}, \\
  m^{\mathrm{N}} &= m_1^{\mathrm{N}} m_2^{\mathrm{N}} + m_1^{\mathrm{N}} m_2^{\Theta} + m_1^{\Theta} m_2^{\mathrm{N}}, \\
  m^{\Theta} &= m_1^{\Theta} m_2^{\Theta} + K, \qquad K = m_1^{\mathrm{A}} m_2^{\mathrm{N}} + m_1^{\mathrm{N}} m_2^{\mathrm{A}}.
\end{aligned}
\end{equation}
Unlike Dempster's rule~\cite{shafer1976}, Yager's rule does not renormalise the conflict $K$ away but moves it to ignorance. Contradicting reports therefore do not yield a confident result but increase the uncertainty, which is precisely the situation in which the decider ought to defer. The same rule is employed by IPEK for global trust~\cite{ipek}; in CREST it is applied to event reports rather than to trust feedback.

\paragraph{Decision rule}
Let $M = (M^{\mathrm{A}}, M^{\mathrm{N}}, M^{\Theta})$ be the fused state and $g = M^{\mathrm{A}} - M^{\mathrm{N}} \in [-1, 1]$ the \emph{decision margin}. The decider takes a firm decision when the margin is large enough:
\begin{equation}\label{eq:decision}
  \text{decision} =
  \begin{cases}
    \text{active} & \text{if } g \ge \tau, \\
    \text{not active} & \text{if } g \le -\tau, \\
    \text{defer} & \text{otherwise},
  \end{cases}
\end{equation}
with the abstention threshold $\tau \in (0, 1)$. A firm decision is final for the current activation of the event. A deferred decision is re-evaluated at every position update of the decider, i.e.\ every $T_{\mathrm{u}}$, with the reports held at that moment, until it either becomes firm or the decider leaves the decision ring, by entering the impact radius or by driving away. A decider that leaves the ring without a firm decision but with at least one report is \emph{unsure}; one without any report is uninformed (Section~\ref{sec:system:geometry}). Only firm decisions trigger relaying (Section~\ref{sec:system:geometry}).

The threshold $\tau$ bounds the influence of an attacker coalition: as shown in Section~\ref{sec:analysis}, a single report changes $g$ by at most its weight $w_i$, so that reversing a correct firm decision requires reports whose weights sum to at least $2\tau$. Since $w_i \le GT_i$, a coalition of attackers with low global trust must be large to reverse a decision. This links the quality of the trust layer to the robustness of the decision. A forced binary rule corresponds to $\tau = 0$ and loses this protection.

\paragraph{Abstention as a waiting state}
Abstention in CREST is a waiting state, not an outcome. A deferring vehicle keeps collecting reports, and its later firm decision rests on more evidence than an immediate forced one. If the vehicle is still unsure when it reaches the event, the application has to act, and CREST leaves this choice to an explicit policy. Three policies are considered:
\begin{itemize}
  \item \textbf{No action:} an unsure vehicle keeps its route. This minimises unnecessary reroutes but misses every active event on which the vehicle stays unsure.
  \item \textbf{CREST+T (best guess):} an unsure vehicle acts on the sign of its last margin $g$, i.e.\ it reroutes if $M^{\mathrm{A}} \ge M^{\mathrm{N}}$.
  \item \textbf{CREST+I (cautious):} an unsure vehicle reroutes, treating a potential hazard as present.
\end{itemize}
CREST+T is the default. CREST+I suits applications in which a missed hazard is much more costly than a detour. The three policies differ only for unsure vehicles; they share every firm decision, so they can be compared on the same simulation runs.

\subsection{Updates to the trust substrate}\label{sec:crest:substrate}

\subsubsection{Credibility-weighted median consensus}\label{sec:crest:cons}

The credibility test of IPEK compares the feedback of each reporter with a per-target consensus, computed as the plain median of the local trust values reported about the target in the round~\cite{ipek}. A plain median has a breakdown point of one half: once colluding attackers are a local majority among the reporters of a target, they determine the consensus. The credibility test then reverses: honest reporters appear deviant and lose credibility, while attackers appear consistent. This happens in practice at high attacker shares, because attackers are not spread evenly over the targets.

CREST replaces the plain median by the credibility-weighted median. For target $V_i$ with reported values $\{LT_{ji}\}_{j \in \mathcal{R}_i}$, the consensus is
\begin{equation}\label{eq:wmedian}
  \tilde{\ell}_i = \min\Bigl\{ \ell \in \{LT_{ji}\} \;:\; \sum_{j \in \mathcal{R}_i,\, LT_{ji} \le \ell} c_j \;\ge\; \tfrac{1}{2} \sum_{j \in \mathcal{R}_i} c_j \Bigr\},
\end{equation}
where $c_j$ is the credibility of reporter $V_j$ from the previous round; if the cumulative weight equals exactly one half, the two adjacent values are averaged, and if all weights are zero, the plain median is used. As in IPEK, the consensus is defined only for targets with at least $n_{\min}$ reporters, and revoked reporters are excluded. The deviations and the credibility update of IPEK are then computed against $\tilde{\ell}_i$.

The weighting moves the breakdown point from a share of reporters to a share of credibility. Attackers can control $\tilde{\ell}_i$ only when their total credibility reaches half of the total (Section~\ref{sec:analysis}). Because the credibility of attackers decreases as long as the consensus is honest, the margin of protection grows by itself. The weight uses credibility only: a variant weighted by $c_j\,GT_j$ performed worse in the tuning stage, since the global trust is itself corrupted at high attacker shares (Section~\ref{sec:discussion}).

\subsubsection{Time-based revocation}\label{sec:crest:revoke}

IPEK revokes a vehicle whose global trust stays below $\theta$ for $N_r$ consecutive rounds~\cite{ipek}. The resulting revocation delay is proportional to the round length $\Delta$, so any change of the update tempo changes the delay. CREST replaces the round count by a time window: a vehicle is revoked at the first round boundary at which its global trust has been below $\theta$ \emph{continuously} for at least $W$ seconds,
\begin{equation}\label{eq:revoke}
  \text{revoke } V_i \text{ at } t_k \iff GT_i(t_m) < \theta \text{ for all round boundaries } t_m \in [t_k - W, t_k].
\end{equation}
A round in which the global trust is at or above $\theta$ restarts the window. The delay then lies in $[W, W + \Delta)$ for any $\Delta$ (Section~\ref{sec:analysis}), so that the update tempo enters only as a rounding term. In this study the window equals $W = (N_r - 1)\Delta$ (Table~\ref{tab:params}), so that at the fixed round length the time rule yields the same decisions as the round rule; the benefit of the rule is that it decouples revocation from the tempo, not a different behaviour at the operating point.

\subsection{Optional position plausibility check}\label{sec:crest:plaus}

Position spoofing (Section~\ref{sec:system:threat}) exploits the proximity factor $\pi_i$ of \eqref{eq:weight}. As an optional add-on, a receiver can test every directly received report against two plausibility conditions:
\begin{itemize}
  \item \textbf{P1:} the declared reporter position is farther from the receiver than $r_{\mathrm{b}} + m_{\mathrm{p}}$, the broadcast range plus a margin;
  \item \textbf{P2:} $\lvert d_{\mathrm{decl}} - \hat{d} \rvert > \tau_{\mathrm{p}}\, \hat{d}$, where $d_{\mathrm{decl}}$ is the distance implied by the declared position and $\hat{d}$ a noisy distance estimate from the received signal, modelled as $\hat{d} = d\,(1 + \varepsilon)$ with $\varepsilon \sim \mathcal{N}(0, \sigma_{\mathrm{p}}^2)$, and $\tau_{\mathrm{p}}$ is the tolerance.
\end{itemize}
A report that fails either condition is dropped without penalising its sender. A penalising variant was rejected in the tuning stage because noise-induced rejections of honest reports reduced honest trust. Forwarded reports cannot be checked in this way, because the receiver has no signal from the original reporter. The check is therefore evaluated separately as an add-on (Section~\ref{sec:results}) and is not part of the CREST configuration compared in the main results.

\section{Analytical properties}\label{sec:analysis}

This section derives five properties that follow from the rules of the decision framework and of the substrate updates. Propositions~\ref{prop:influence} and~\ref{prop:reverse} bound the influence of a reporter on the decision and link the trust layer to the robustness of the decision layer. Proposition~\ref{prop:abstain} states when an honest decider abstains. Proposition~\ref{prop:revoke} shows that the revocation delay of CREST does not scale with the round length. Proposition~\ref{prop:wmedian} gives the breakdown point of the credibility-weighted consensus.

\subsection{Influence of a reporter on the decision}

Fusing a report with state $\mathrm{A}$ and weight $w$ into a state
$M = (M^{\mathrm{A}}, M^{\mathrm{N}}, M^{\Theta})$ with Yager's rule~\eqref{eq:yager} gives
\begin{equation}\label{eq:yager-step}
  (M^{\mathrm{A}}, M^{\mathrm{N}}, M^{\Theta}) \;\mapsto\; \bigl(M^{\mathrm{A}} + w M^{\Theta},\; (1-w) M^{\mathrm{N}},\; (1-w) M^{\Theta} + w M^{\mathrm{N}}\bigr),
\end{equation}
and symmetrically for a report with state $\mathrm{N}$. The map is linear in $M$, and starting from the vacuous state $(0, 0, 1)$ it also produces the first fused report. The fused state of a report list is therefore the product of linear maps applied to $(0,0,1)$.

\begin{proposition}[Single-reporter influence]\label{prop:influence}
Inserting one report of weight $w$ at any position of any report list changes the decision margin $g = M^{\mathrm{A}} - M^{\mathrm{N}}$ by at most $w$. A coalition of reports with weights $w_1, \dots, w_k$ changes it by at most $\min(2, \sum_j w_j)$.
\end{proposition}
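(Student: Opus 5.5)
The plan is to exploit the linearity of the Yager step~\eqref{eq:yager-step} and to track the \emph{difference} between two fused states rather than the states themselves. Write $F^{\mathrm{A}}_v$ and $F^{\mathrm{N}}_v$ for the linear maps that fuse a report with state A or N and weight $v$. Inserting a report of weight $w$ at position $k$ of a list changes only that one factor in the product of maps. The other reports keep their weights, since insertion does not alter their trust, proximity or freshness. So if $P$ is the state after the first $k-1$ reports and $L$ is the composition of the maps of the remaining reports, the two fused states are $L(P)$ and $L(F_w P)$. By linearity their difference is $L(\delta)$ with $\delta = F_w P - P$. All three components of both states sum to one, so $\delta$ lies in the subspace $\{(a,b,c): a+b+c=0\}$.

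First, I compute the initial difference. For an A report, \eqref{eq:yager-step} gives $\delta = w\,(P^{\Theta},\,-P^{\mathrm{N}},\,P^{\mathrm{N}}-P^{\Theta})$. On the zero-sum subspace I will use the seminorm $\nu(a,b,c) = |a|+|b|$. It dominates the change of the margin, since $|a-b| \le |a|+|b|$. At the insertion point, $\nu(\delta) = w\,(P^{\Theta}+P^{\mathrm{N}}) \le w$, because $P$ is a mass function. The N case is symmetric.

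Second, I show that every fusion map is non-expanding for $\nu$ on this subspace. This is the crux of the argument. The plain $\ell_1$ norm is also non-expanding, since the maps are column-stochastic, but it would only give the bound $2w$. For an A report of weight $v$ and $c = -(a+b)$, the map sends $(a,b,c)$ to $a' = a+vc = (1-v)a - vb$ and $b' = (1-v)b$. Hence
\[
|a'|+|b'| \le (1-v)|a| + v|b| + (1-v)|b| \le |a|+|b|.
\]
For an N report the roles of $a$ and $b$ swap, and the same inequality follows. Iterating over the suffix gives $\nu(L\delta) \le \nu(\delta) \le w$, and therefore $|\Delta g| \le w$. The remaining risk is only in checking the N case and the special cases where the insertion is at the front of the list ($P=(0,0,1)$) or at the end ($L$ the identity). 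Both are covered, because the vacuous state is a valid $P$.

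For a coalition, I insert the reports one at a time. Each insertion goes into the current list, which already contains the previously inserted coalition reports. Insertion order does not matter for the bound, because each step holds for \emph{any} list and any position. Telescoping with the triangle inequality gives $|\Delta g| \le \sum_j w_j$. Independently, $g \in [-1,1]$ for every mass function, so the change never exceeds $2$. Together these give $\min(2,\sum_j w_j)$.
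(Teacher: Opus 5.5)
Your proof is correct and shares the paper's skeleton: the same prefix state $P$, the same perturbation $\delta = w\,(P^{\Theta}, -P^{\mathrm{N}}, P^{\mathrm{N}}-P^{\Theta})$, the same suffix map $L$ and the same final estimate $w\,(P^{\Theta}+P^{\mathrm{N}})\le w$. Where you differ is the key lemma, which you prove in primal form and the paper proves in dual form. The paper tracks the row functional $f=(1,-1,0)\,L$ and shows by backward induction over the suffix that $0\le f_1-f_3\le 1$ and $0\le f_3-f_2\le 1$. On zero-sum vectors $f\cdot(a,b,c)=(f_1-f_3)\,a-(f_3-f_2)\,b$, so this invariant says two things: $f$ has dual norm at most one with respect to your $\nu$, and both coefficients are nonnegative. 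Your one-step non-expansion of $\nu$ is the adjoint counterpart of the norm part. Your route is quicker to check, with one inequality per fusion map and no need to work out how the maps act on row vectors. It also presents the whole suffix as a non-expansive map, so it bounds any other local change of the list, such as reweighting an existing report, without further work. The paper's route yields more, because of the sign condition. It makes the change $w\,[P^{\Theta}(f_1-f_3)+P^{\mathrm{N}}(f_3-f_2)]$ nonnegative, so inserting an A report can never lower the margin and an N report can never raise it. Your sign-blind seminorm cannot see this monotonicity. Only the magnitude bound is needed for the proposition and for Proposition~\ref{prop:reverse}, so nothing is lost for the statement itself.
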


\begin{proof}
Let the report be inserted after a prefix with fused state $s = (M^{\mathrm{A}}, M^{\mathrm{N}}, M^{\Theta})$, and assume it reports $\mathrm{A}$ (the other case is symmetric). By~\eqref{eq:yager-step} the insertion changes the state by
\[
  \delta = w\,\bigl(M^{\Theta},\; -M^{\mathrm{N}},\; M^{\mathrm{N}} - M^{\Theta}\bigr).
\] The suffix acts as a linear map $L$, so the final margin is $f \cdot s$ with the row vector $f = (f_1, f_2, f_3) = (1, -1, 0)\,L$, and the change of the margin is
\[
  f \cdot \delta = w\,\bigl[ M^{\Theta}(f_1 - f_3) + M^{\mathrm{N}}(f_3 - f_2) \bigr].
\]
It is shown by backward induction over the suffix that $f_1 \ge f_3 \ge f_2$, $f_1 - f_3 \le 1$ and $f_3 - f_2 \le 1$. For the empty suffix, $f = (1, -1, 0)$ satisfies these conditions. Prepending an $\mathrm{A}$ report of weight $u$ maps $f$ to $\bigl(f_1,\; (1-u) f_2 + u f_3,\; u f_1 + (1-u) f_3\bigr)$; hence $f_1 - f_3' = (1-u)(f_1 - f_3)$ and $f_3' - f_2' = u (f_1 - f_3) + (1-u)(f_3 - f_2)$, which preserve the order and both bounds. An $\mathrm{N}$ report is symmetric. Therefore $\lvert f \cdot \delta \rvert \le w\,(M^{\Theta} + M^{\mathrm{N}}) \le w$. The coalition bound follows by inserting the reports one at a time and applying the triangle inequality; since $g \in [-1, 1]$, no change exceeds 2.
\end{proof}

The bound does not depend on the position of the report, so it holds for every fusion order. A numerical check over $5 \cdot 10^4$ random report lists ($n \le 30$, random insertion positions) found no violation larger than $3 \cdot 10^{-16}$; a tighter coalition bound of the form $1 - \prod_j (1 - w_j)$ does not hold.

\begin{proposition}[Cost of reversing a decision]\label{prop:reverse}
Let a decider hold a correct firm decision with margin $g \ge \tau$. Pushing it into abstention requires coalition weight $\sum_j w_j > g - \tau$; reversing it into a wrong firm decision requires $\sum_j w_j \ge g + \tau \ge 2\tau$. Since $w_j \le GT_j$ by~\eqref{eq:weight}, $k$ attackers with global trust at most $T_a$ can reverse a correct firm decision only if $k \ge 2\tau / T_a$.
\end{proposition}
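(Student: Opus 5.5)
The plan is to reduce everything to Proposition~\ref{prop:influence}. Let $g$ be the margin of the decider's honest report list. Let $g'$ be the margin after the coalition's reports, with weights $w_1,\dots,w_k$, have been inserted at arbitrary positions of that list. If a coalition member replaces an earlier report of the same sender, we first remove the old report and then insert the new one. The old report is honest evidence, so its removal is either attributed to the coalition's budget or excluded by assuming attackers and honest reporters are disjoint. We take the latter as the setting of the statement. Proposition~\ref{prop:influence} then gives $|g' - g| \le \sum_j w_j$, uniformly in the fusion order and in the insertion positions.

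\textbf{Abstention.} The decider no longer holds a firm ``active'' decision exactly when $g' < \tau$. Since $g' \ge g - \sum_j w_j$, this requires $g - \sum_j w_j < \tau$, i.e.\ $\sum_j w_j > g - \tau$.

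\textbf{Reversal.} A wrong firm decision requires $g' \le -\tau$. The same lower bound then gives $g - \sum_j w_j \le -\tau$, i.e.\ $\sum_j w_j \ge g + \tau$. Because $g \ge \tau$, this is at least $2\tau$. The case of a correct ``not active'' decision with $g \le -\tau$ follows in the same way from the symmetry $\mathrm{A}\leftrightarrow\mathrm{N}$ of the Yager update~\eqref{eq:yager-step}, which maps $g$ to $-g$.

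\textbf{Counting attackers.} By~\eqref{eq:weight}, $w_j = GT_j\pi_j\phi_j \le GT_j \le T_a$, so $\sum_j w_j \le kT_a$. Combining this with $\sum_j w_j \ge 2\tau$ gives $k \ge 2\tau/T_a$ for $T_a>0$; if $T_a = 0$, no coalition can reverse the decision at all.

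The only delicate point is how ``firm decision'' is to be read. A firm decision is final for the current activation, so strictly nothing can reverse it afterwards. The statement should therefore be read as a comparison of the margins the decider would compute with and without the coalition's reports at the evaluation time. Equivalently, it bounds how far attacker reports can move any re-evaluated margin, whether at the same or a later update with otherwise identical honest evidence. I would state this interpretation explicitly and also note that freshness decay of honest reports is not attributed to the attackers. With that settled, the proof is a two-line application of the triangle-type bound from Proposition~\ref{prop:influence}.
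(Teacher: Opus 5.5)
Your proof is correct and follows the paper's route. The paper calls the result immediate from the coalition bound of Proposition~\ref{prop:influence}, and your steps ($g' \ge g - \sum_j w_j$, the two threshold conditions, and $\sum_j w_j \le k T_a$) just unpack that remark; your handling of in-place replacement of an attacker's earlier report and of the counterfactual reading of ``reversing'' a final decision are sensible clarifications the paper leaves implicit.
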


\begin{proof}
Immediate from Proposition~\ref{prop:influence}.
\end{proof}

Proposition~\ref{prop:reverse} is the quantitative link between the two layers: every reduction of the attackers' global trust by the trust layer enlarges the coalition that the decision layer can withstand. The two alternative rules lose this property in different ways. A forced binary rule has $\tau = 0$, so the barrier $2\tau$ disappears and a single report can tip a balanced list. A majority vote gives every report the same weight, so the influence of a report is independent of its trust.

\subsection{Abstention of honest deciders}

\begin{proposition}[Abstention condition]\label{prop:abstain}
If a decider holds $n$ concordant reports of equal weight $w \in (0, 1)$, its margin is $g = 1 - (1 - w)^n$, and it decides firmly if and only if
\[
  n \ge n^{*}(\tau, w) = \left\lceil \frac{\ln(1 - \tau)}{\ln(1 - w)} \right\rceil .
\]
\end{proposition}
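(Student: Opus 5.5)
Without loss of generality the $n$ concordant reports all state $\mathrm{A}$; the $\mathrm{N}$ case follows by exchanging the roles of $\mathrm{A}$ and $\mathrm{N}$, which maps $g$ to $-g$ and the condition $g \ge \tau$ to $g \le -\tau$. The plan is to iterate the explicit one-step map~\eqref{eq:yager-step} from the vacuous state $(0,0,1)$. A fused list with no $\mathrm{N}$ report never acquires $\mathrm{N}$ mass: the second component is multiplied by $(1-w)$ at each $\mathrm{A}$ step and starts at zero. Hence $M^{\mathrm{N}} = 0$ throughout, and the third component evolves as $M^{\Theta} \mapsto (1-w)M^{\Theta}$. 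Moreover no conflict $K$ ever arises, so Yager's rule coincides here with Dempster's rule.

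By induction on $n$ the state after $n$ reports is therefore
\[
\bigl(1-(1-w)^n,\; 0,\; (1-w)^n\bigr).
\]
The base case $n=0$ is the vacuous state. For the inductive step, the first component becomes $1-(1-w)^n + w(1-w)^n = 1-(1-w)^{n+1}$. Thus $g = M^{\mathrm{A}} - M^{\mathrm{N}} = 1-(1-w)^n$, which proves the first claim.

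For the decision, $g \ge 0 > -\tau$, so the outcome ``not active'' is impossible, and by~\eqref{eq:decision} the decider is firm if and only if $1-(1-w)^n \ge \tau$. This is equivalent to $(1-w)^n \le 1-\tau$. Since $w,\tau \in (0,1)$, both $1-w$ and $1-\tau$ lie in $(0,1)$, so taking logarithms gives $n \ln(1-w) \le \ln(1-\tau)$. Dividing by $\ln(1-w) < 0$ reverses the inequality to
\[
n \ge \frac{\ln(1-\tau)}{\ln(1-w)}.
\]
Because $n$ is an integer, this is equivalent to $n \ge n^{*}(\tau,w)$ with the stated ceiling.

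The only step needing care is this last equivalence. One must track the sign flip when dividing by the negative logarithm, and one should note that the ceiling is exact in both directions, including the boundary case in which the ratio is an integer, where equality $g=\tau$ counts as firm because the rule uses $\ge$. Monotonicity of $g$ in $n$ is not needed, since the algebraic equivalence already handles every $n$.
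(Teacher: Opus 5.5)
Your proof is correct and takes the same route as the paper: with $M^{\mathrm{N}} = 0$, the one-step Yager map multiplies the ignorance mass by $1-w$ at each report, which gives $g = 1-(1-w)^n$, and the condition $g \ge \tau$ is then solved for $n$. You add the explicit induction, the sign flip when dividing by $\ln(1-w) < 0$, and the integer-ceiling equivalence (including the boundary case $g = \tau$), which the paper leaves implicit.
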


\begin{proof}
By~\eqref{eq:yager-step} with $M^{\mathrm{N}} = 0$, the ignorance mass is multiplied by $1 - w$ at each step, so $M^{\Theta} = (1-w)^n$ and $g = M^{\mathrm{A}} = 1 - (1 - w)^n$. The condition $g \ge \tau$ is equivalent to the stated bound.
\end{proof}

A single report suffices if its weight satisfies $w \ge \tau$, whereas for a small threshold a weight of $\tau/2$ already requires three reports. Since $w$ decreases with the distance of the reporter from the event and with the age of the report, abstention among honest deciders is caused by sparse, distant or stale evidence. It therefore decreases with traffic density, and its application cost, the hazards missed by vehicles that remain unsure, is largest in sparse traffic. Section~\ref{sec:results} confirms both effects.

\subsection{Revocation delay}

\begin{proposition}[Tempo independence of time-based revocation]\label{prop:revoke}
Let the global trust of a vehicle fall below $\theta$ at round boundary $t_0$ and remain below it. Under the round rule with $N_r$ rounds the vehicle is revoked at $t_0 + (N_r - 1)\Delta$. Under the time rule~\eqref{eq:revoke} with window $W$ it is revoked at $t_0 + \lceil W / \Delta \rceil \Delta$, i.e.\ after a delay in $[W, W + \Delta)$.
\end{proposition}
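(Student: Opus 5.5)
The plan is a direct computation from the two revocation rules, with round boundaries $t_m = t_0 + m\Delta$ for $m \ge 0$. The hypothesis gives $GT_i(t_m) < \theta$ for all $m \ge 0$. We may also assume $GT_i(t_{-1}) \ge \theta$ at the preceding boundary; this is what it means for the trust to \emph{fall} below $\theta$ at $t_0$.

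\emph{Round rule.} IPEK revokes once the trust has been below $\theta$ at $N_r$ consecutive round boundaries. These are first available at $t_0, t_1, \dots, t_{N_r-1}$. Revocation therefore occurs at $t_{N_r-1} = t_0 + (N_r-1)\Delta$, and at no earlier boundary, since the boundary $t_{-1}$ breaks any longer run. The delay is thus proportional to $\Delta$.

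\emph{Time rule.} By~\eqref{eq:revoke}, the vehicle is revoked at $t_k$ iff every boundary $t_m \in [t_k - W, t_k]$ satisfies $GT_i(t_m) < \theta$. I will show this holds iff $t_k - W \ge t_0$, i.e.\ $k\Delta \ge W$.
\begin{itemize}
  \item If $t_k - W \ge t_0$, then all boundaries in the window have index $m \ge 0$, so the condition holds.
  \item If $t_k - W < t_0$, I need some boundary in the window with trust at or above $\theta$. The natural candidate is $t_{-1}$, which requires $t_{-1} \ge t_k - W$.
\end{itemize}

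The second case is the main obstacle. For $W$ not a multiple of $\Delta$, $t_{-1}$ may lie before the window, and then the window contains only sub-threshold boundaries. Taken literally, the rule would fire at the smallest $k$ with $k\Delta > W - \Delta$, which is one round early. I would resolve this by reading ``continuously for at least $W$ seconds'' as the paper's own phrasing suggests. The elapsed time since the trust first fell below $\theta$, namely $t_k - t_0$, must be at least $W$. Equivalently, the window must contain no boundary at which the trust was at or above $\theta$, including the last one before the run began. Under this reading the condition is simply $k\Delta \ge W$, and it is monotone in $k$ because the trust stays below $\theta$.

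The first revocation time is then the smallest such $k$, namely $k^{*} = \lceil W/\Delta \rceil$, giving revocation at $t_0 + \lceil W/\Delta\rceil\Delta$. The delay bound then follows from the elementary inequality $x \le \lceil x \rceil < x + 1$ with $x = W/\Delta$, which after multiplying by $\Delta$ gives a delay $D$ with $W \le D < W + \Delta$.

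Finally, as a consistency check, I would note the operating point. With $W = (N_r-1)\Delta$ we get $\lceil W/\Delta\rceil = N_r - 1$, so the two rules coincide, as claimed in Section~\ref{sec:crest:revoke}.
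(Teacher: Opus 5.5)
Your proof is correct and follows the paper's argument. For the round rule you count boundaries; for the time rule you take the first $k$ with $k\Delta \ge W$, which gives $k = \lceil W/\Delta \rceil$, and the delay bound follows from $x \le \lceil x \rceil < x+1$.

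Your observation about \eqref{eq:revoke} is also right. Read literally as a test at round boundaries only, it would fire one round early, at $k = \lfloor W/\Delta \rfloor$, whenever $W/\Delta$ is not an integer. The paper's one-line proof silently adopts the same ``continuously for $W$ seconds'' reading that you make explicit: the trust is held constant between boundaries, so the condition becomes $t_k - t_0 \ge W$. At the operating point $W = 2\Delta$ the two readings coincide.
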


\begin{proof}
Global trust and the revocation test change only at round boundaries $t_0 + k\Delta$. The round rule counts boundaries and fires at the $N_r$-th one, $k = N_r - 1$. The time rule fires at the first boundary with $k\Delta \ge W$, $k = \lceil W/\Delta \rceil$.
\end{proof}

Under the round rule the delay grows linearly with $\Delta$, so any change of the update tempo, for instance an adaptive tempo, also changes the revocation delay. Under the time rule $\Delta$ enters only as a rounding term smaller than one round. With the parameters of this study, $W = (N_r - 1)\Delta$ (Table~\ref{tab:params}), so that both rules revoke after exactly $W$, which is why the two rules produce identical results at the operating point of this study (Section~\ref{sec:results}). The proposition covers the revocation rule only: the convergence of global trust itself still depends on the number of rounds, so the detection delay as a whole is not tempo-free.

\subsection{Breakdown point of the weighted consensus}

\begin{proposition}[Breakdown point]\label{prop:wmedian}
Let the reporters of a target be split into honest reporters $\mathcal{H}$ and attackers $\mathcal{A}$, and let $\tilde{\ell}$ be the credibility-weighted median~\eqref{eq:wmedian}. If
\[
  \sum_{a \in \mathcal{A}} c_a < \frac{1}{2} \sum_{j \in \mathcal{H} \cup \mathcal{A}} c_j ,
\]
then $\tilde{\ell}$ lies within the range of the honest values, $\min_{h \in \mathcal{H}} LT_h \le \tilde{\ell} \le \max_{h \in \mathcal{H}} LT_h$, whatever values the attackers report. If all honest reporters have credibility $c_h$ and all attackers $c_a$, the consensus is safe as long as the local attacker share satisfies
\[
  p < p^{*} = \frac{c_h}{c_h + c_a}.
\]
\end{proposition}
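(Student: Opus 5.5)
We will prove the two inequalities $\min_{h}LT_h \le \tilde{\ell}$ and $\tilde{\ell} \le \max_{h}LT_h$ separately, using only the defining condition of the weighted median~\eqref{eq:wmedian}. Write $C = \sum_{j \in \mathcal{H}\cup\mathcal{A}} c_j$, $C_{\mathcal{A}} = \sum_{a\in\mathcal{A}} c_a$ and $C_{\mathcal{H}} = C - C_{\mathcal{A}}$. The hypothesis $C_{\mathcal{A}} < C/2$ is equivalent to $C_{\mathcal{H}} > C/2$. In particular $C > 0$, so the plain-median fallback is not triggered, and the weighted rule applies.

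\emph{Upper bound.} Let $h^{+} = \max_{h\in\mathcal{H}} LT_h$. Every honest reporter satisfies $LT_h \le h^{+}$, so the cumulative weight up to $h^{+}$ satisfies
\[
\sum_{j:\,LT_j \le h^{+}} c_j \;\ge\; C_{\mathcal{H}} \;>\; C/2 .
\]
Since $h^{+}$ is a reported value, it belongs to the set over which the minimum in~\eqref{eq:wmedian} is taken. Hence $\tilde{\ell} \le h^{+}$.

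\emph{Lower bound.} Let $h^{-} = \min_{h} LT_h$ and take any reported $\ell < h^{-}$. Every value $LT_j \le \ell$ belongs to an attacker, so the cumulative weight at $\ell$ is at most $C_{\mathcal{A}} < C/2$. Therefore $\ell$ fails the condition, and $\tilde{\ell} \ge h^{-}$.

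\emph{Tie-averaging clause.} This is the step needing care, since averaging two adjacent values could in principle leave the honest range. Because $C_{\mathcal{A}} < C/2 < C_{\mathcal{H}}$, an exact tie at $C/2$ can occur only at a value $\ell$ with $h^{-} \le \ell < h^{+}$. Below $h^{-}$ the cumulative weight is at most $C_{\mathcal{A}} < C/2$, and at $h^{+}$ it is already strictly above $C/2$. The adjacent value is then the next reported value above $\ell$, which is at most $h^{+}$. The average of two numbers in $[h^{-}, h^{+}]$ stays in that interval, so the bound also covers the tie case.

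\emph{Equal-credibility corollary.} Let the target have $n$ reporters, a fraction $p$ of them attackers. Then $C_{\mathcal{A}} = pn\,c_a$ and $C_{\mathcal{H}} = (1-p)n\,c_h$. The condition $C_{\mathcal{A}} < C_{\mathcal{H}}$, i.e.\ $p c_a < (1-p) c_h$, rearranges to $p < c_h/(c_h + c_a) = p^{*}$, and the first part of the proposition then applies.
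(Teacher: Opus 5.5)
Your proof is correct and follows the paper's argument. The upper bound comes from the whole honest credibility (exceeding $C/2$) being accumulated at the largest honest value, and the lower bound from only attacker credibility lying below the smallest honest value. The tie-averaging case is settled by noting that the value averaged in cannot exceed the largest honest value, and $p^{*}$ follows from the same rearrangement $p\,c_a < (1-p)\,c_h$. You also make explicit two points the paper leaves implicit: the hypothesis forces $C>0$, so the plain-median fallback never applies, and a tie can only occur at some $\ell$ with $\min_h LT_h \le \ell < \max_h LT_h$.
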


\begin{proof}
Let $C$ be the total credibility. The cumulative credibility of all values up to the largest honest value contains the whole honest credibility, which exceeds $C/2$; hence the weighted median, the smallest value at which the cumulative credibility reaches $C/2$, is not larger than that value. When the cumulative credibility equals exactly $C/2$ at some value, the median is averaged with the next value of positive weight, which is again not larger than the largest honest value, since honest credibility remains above it. Conversely, the values below the smallest honest value carry only attacker credibility, which is less than $C/2$, so the median is not smaller than the smallest honest value. With uniform credibilities the condition reads $p\,c_a < (1-p)\,c_h$, which gives $p^{*}$.
\end{proof}

For the plain median of IPEK, $c_a = c_h$ and $p^{*} = 1/2$. With credibility weighting, $p^{*} > 1/2$ as soon as attackers have lower credibility than honest reporters. The update of the credibility makes this margin self-reinforcing: while the consensus is honest, attackers deviate from it, their credibility $c_a$ decreases and $p^{*}$ increases. Once $p > p^{*}$, the mechanism reverses and the honest reporters become the deviant ones. Since attackers are not spread evenly over the targets, local attacker majorities occur even when the global share is below one half; this is where the weighted consensus matters (Section~\ref{sec:results}).

\section{Evaluation methodology}\label{sec:method}

The evaluation addresses three questions: how CREST performs as a complete system against complete competing systems, how much of its benefit is due to the decision rule itself, and how much is due to the updates of the trust substrate. The protocol, including all hypotheses, arms, cells and statistical procedures, was registered before the corresponding simulation runs; deviations are reported in Section~\ref{sec:method:stats}.

\subsection{Simulation setup}\label{sec:method:setup}

The simulations are carried out with OMNeT++ 5.6.2~\cite{omnetpp-sw,omnetpp}, the Veins 5.2 framework~\cite{veins-sw,veins} and the SUMO 1.8.0 traffic simulator~\cite{sumo-sw,sumo}. The road network is a Manhattan grid. Two traffic densities are considered, a base scenario with $N_V$ vehicles and a dense scenario with $N_V^{\mathrm{d}}$ vehicles; each run lasts $T_{\mathrm{sim}}$ and uses one CA. Event, perception and attack models are those of Section~\ref{sec:system}. The ideal channel model is identical for all systems, so that differences between systems are due to their trust and decision logic. Table~\ref{tab:params} lists the parameters of the system, the attacks, the trust substrate and CREST with their values; Table~\ref{tab:params-baselines} lists those of the compared systems.

\begin{table}[p]
\centering
\caption{Parameters of the system, the attacks, the trust substrate and CREST. Values marked $^{*}$ were tuned under the fairness protocol (Section~\ref{sec:method:fair}); the published value is given in parentheses. Sensitivity values are listed after the main value. The parameters of the compared systems are given in Table~\ref{tab:params-baselines}.}
\label{tab:params}
\scriptsize
\setlength{\tabcolsep}{4pt}
\begin{tabular}{@{}L{1.9cm}L{5.9cm}L{4.5cm}@{}}
\toprule
Symbol & Description & Value \\
\midrule
\multicolumn{3}{@{}l}{\emph{Simulation}} \\
-- & Road network & $5 \times 5$ grid, $2000 \times 2000$\,m \\
$N_V$, $N_V^{\mathrm{d}}$ & Number of vehicles, base / dense scenario & 150 / 300 \\
$T_{\mathrm{sim}}$ & Simulated time per run & 1000\,s \\
$T_{\mathrm{u}}$ & Position update period of a vehicle & 1\,s \\
\midrule
\multicolumn{3}{@{}l}{\emph{Network, events and perception (Section~\ref{sec:system:net})}} \\
$r_{\mathrm{b}}$ & Broadcast range & 500\,m \\
$N_E$ & Number of events & 40 \\
$S_E$, $S_L$ & Event severity / location criticality: Class 1; Class 2; Class 3 & 0.2 / 0.2; one in $[0.7, 1.0]$, the other in $[0.1, 0.6]$; 0.9 / 0.9 \\
$D_0$ & Base active duration, $D_e = D_0(1 + CF)$ & 100\,s \\
$R$ & Impact radius of an event & 250\,m \\
$p_0$ & Base perception-error probability & 0.1 \\
$T_{\mathrm{rep}}$ & Report period of a witness & 5\,s \\
$D_{\mathrm{acc}}$ & Acceptance distance of reports & 800\,m \\
\midrule
\multicolumn{3}{@{}l}{\emph{Decision geometry (Section~\ref{sec:system:geometry})}} \\
$\beta$; $\beta_{\mathrm{L}}$ & Ring ratio of the decision ring $R < d \le \beta R$; enlarged ratio & 1.3; 2 \\
$h$; $h_{\max}$ & Relaying hops; sensitivity value & 0; 2 \\
\midrule
\multicolumn{3}{@{}l}{\emph{Attacks (Section~\ref{sec:system:threat})}} \\
$\mathcal{A}$; $\mathcal{A}_3$ & Attacker ratios $a$; subset used for Groups 2, 3, the dense scenario and the geometry analysis & $\{0.10, 0.25, 0.30, 0.35, 0.40\}$; $\{0.10, 0.25, 0.40\}$ \\
$\theta_E$, $\theta_L$ & Severity and location triggers & 0.6, 0.4 \\
$T_{\mathrm{oo}}$ & On-off period (attacks in the first half) & 40\,s \\
$r_{\mathrm{s}}$ & Radius of the spoofed-position disk around the event centre & 50\,m \\
$\delta$; $\delta^{-}$, $\delta^{+}$ & Camouflage margin; sensitivity values & 0.10; 0.05, 0.20 \\
\midrule
\multicolumn{3}{@{}l}{\emph{Trust substrate: IPEK (Section~\ref{sec:system:trust})}} \\
$GT^{0}$ & Initial global trust (vacuous mass) & 0.5 \\
$\mu$; $\omega_E$, $\omega_L$ & Balance coefficient; weights of $S_E$ and $S_L$ in the reward & 0.15; 0.6, 0.4 \\
$\kappa$ & Penalty ratio & $5^{*}$ (5) \\
$d_0$ & Deviation at zero credibility & $0.3^{*}$ (0.5) \\
$\rho$ & Credibility smoothing factor & $1.0^{*}$ (0.5) \\
$n_{\min}$ & Minimum reporters per target (and targets per reporter) & 3 \\
$\Delta$ & Round length of the global-trust update & 50\,s \\
\midrule
\multicolumn{3}{@{}l}{\emph{CREST (Section~\ref{sec:crest})}} \\
$\tau$ & Abstention threshold & 0.10 \\
$\theta$ & Detection threshold & $0.35^{*}$ \\
$W$ & Revocation window of the time rule & 100\,s \\
$N_r$ & Rounds of the round rule (IPEK; ablation) & 3 \\
$m_{\mathrm{p}}$, $\sigma_{\mathrm{p}}$, $\tau_{\mathrm{p}}$ & Plausibility check: range margin, distance-estimate noise, tolerance & 50\,m, 0.2, 1.0 \\
\bottomrule
\end{tabular}
\end{table}

\begin{table}[t]
\centering
\caption{Parameters of the compared systems. Values marked $^{*}$ were tuned under the fairness protocol (Section~\ref{sec:method:fair}); all other values are those of the original papers.}
\label{tab:params-baselines}
\scriptsize
\setlength{\tabcolsep}{4pt}
\begin{tabular}{@{}L{1.9cm}L{5.9cm}L{4.5cm}@{}}
\toprule
Symbol & Description & Value \\
\midrule
\multicolumn{3}{@{}l}{\emph{HTEMD~\cite{htemd}}} \\
$\alpha_{\mathrm{H}}$ & Penalty rate of the local update & $0.3^{*}$ (unspecified; initial guess 0.3) \\
$\lambda_{\mathrm{H}}$ & Coefficient of the activity term of direct trust & $2.0^{*}$ (unspecified; 0.5) \\
$k_{\mathrm{H}}$ & Width of the outlier filter in units of $1.4826\,$MAD & $1.0^{*}$ (unspecified; 3) \\
$GT^{0}_{\mathrm{H}}$ & Initial trust & 0.2 \\
$\tau_{\mathrm{H}}$ & Decision threshold & 0.5 (paper: drawn from $\mathcal{N}(0.5, 0.5)$) \\
$\theta_{\mathrm{H}}$ & Revocation threshold (immediate revocation) & 0.05 \\
\midrule
\multicolumn{3}{@{}l}{\emph{TCEMD~\cite{tcemd}}} \\
$\Theta_{\mathrm{T}}$ & Distinct negative reporters required for revocation & 50 \\
$\theta_{\mathrm{T}}$ & Revocation threshold on trust & 0.05 \\
$\lambda_{\mathrm{T}}$ & Trust decay without feedback & 0.95 \\
$T_p$ & Relaying threshold of a vehicle, drawn once & $\mathcal{N}(0.5, 0.5)$, clipped to $[0,1]$ \\
\midrule
\multicolumn{3}{@{}l}{\emph{AR-NDA~\cite{arnda}}} \\
$\Delta_{\mathrm{A}}$ & Belief difference for a direct decision & 0.9 \\
$n_0$; $\Delta n$ & Vote threshold; increment per importance level & 15 (sensitivity 5, 10); 10 \\
$\mathit{Thr}_1$, $\mathit{Thr}_2$ & Confidence thresholds (several reports / a single report) & 0.7, 0.6 \\
$a_{\mathrm{A}}$, $b_{\mathrm{A}}$ & Coefficients of the weighting factor $a_{\mathrm{A}} + b_{\mathrm{A}} \exp(m)$ & 0.4, 0.2 \\
\bottomrule
\end{tabular}
\end{table}

\subsection{Compared systems}\label{sec:method:systems}

The comparison is organised in three groups. Each group answers one question and changes only the components relevant to that question.

\paragraph{Group 1: complete systems}
The main comparison is between complete systems, each consisting of its own trust model, decision rule and revocation rule as specified in its original paper. Attaching the CREST decision rule to the trust models of the competitors would not test the competitors as published, and attaching their decision rules to IPEK would not test CREST; therefore no such hybrids are used.
\begin{itemize}
  \item \textbf{CREST:} IPEK trust with the substrate updates of Section~\ref{sec:crest:substrate} (credibility-weighted median consensus, time-based revocation with window $W$ and threshold $\theta$) and the abstaining decision rule with threshold $\tau$. Application results are reported for the policies CREST+T and CREST+I of Section~\ref{sec:crest:decision}.
  \item \textbf{HTEMD}~\cite{htemd}: hybrid direct and certificate trust, an entropy-weighted forced binary decision taken once when the vehicle enters the decision area, and immediate revocation when trust falls below $\theta_{\mathrm{H}}$, as in the paper.
  \item \textbf{TCEMD}~\cite{tcemd}: cascading trust, a trust-weighted forced binary decision on the sign of $D_t = \sum_i s_i\,Tr_i / \sum_i Tr_i$ with $s_i = \pm 1$, and revocation when the number of distinct negative reporters exceeds $\Theta_{\mathrm{T}}$ and trust falls below $\theta_{\mathrm{T}}$, as in the paper.
\end{itemize}

\paragraph{Group 2: decision rules on a common trust substrate}
To isolate the contribution of the decision rule, the decision rules are compared on the same trust substrate, revocation rule and threshold as CREST; only the rule that turns the received reports into a decision differs.
\begin{itemize}
  \item \textbf{Majority:} unweighted vote; the vehicle abstains when the normalised vote difference $\lvert n_{\mathrm{A}} - n_{\mathrm{N}} \rvert / n$ is below $\tau$.
  \item \textbf{Forced binary:} the CREST fusion of Section~\ref{sec:crest:decision} without abstention, i.e.\ with the threshold set to zero; a decision is taken as soon as one report is available.
  \item \textbf{Unweighted:} the CREST fusion and threshold with all weights set to one, i.e.\ without trust, proximity and freshness.
  \item \textbf{AR-NDA}~\cite{arnda}: fuzzy trustworthiness from reputation and proximity, Bayesian-approximated Dempster combination, direct decision when the belief difference exceeds $\Delta_{\mathrm{A}}$, and the paper's vote-count inference otherwise, with all parameters as given in the paper. AR-NDA defines no reputation model and assumes that reputations are available; here they are the global trust values of the common substrate. AR-NDA always reaches a firm decision when at least one report is available.
\end{itemize}

\paragraph{Group 3: ablation of the substrate updates}
CREST is compared with two variants that each undo one substrate update: CREST with the plain median consensus of IPEK ($-$weighted median) and CREST with the round-count revocation rule with $N_r$ rounds ($-$time rule).

\paragraph{Sensitivity analyses}
\begin{itemize}
  \item \textbf{Decision geometry:} the ring ratio and relaying are varied in a $2 \times 2$ design, $\{\beta, \beta_{\mathrm{L}}\} \times \{0, h_{\max}\}$ hops, for CREST, the forced binary rule, HTEMD and TCEMD.
  \item \textbf{Camouflage margin:} $\delta^{-}$ and $\delta^{+}$ in addition to the main value $\delta$, for CREST.
  \item \textbf{AR-NDA vote threshold:} two values of $n_0$ below the published one, since the number of reports per decision in the present scenario differs from that of the original evaluation.
  \item \textbf{Plausibility check:} CREST with the optional check of Section~\ref{sec:crest:plaus} under the two spoofing attacks.
\end{itemize}

\subsection{Mapping of the competing systems}\label{sec:method:mapping}

The competing systems were designed for different geometries and partly leave behaviour unspecified. Each was mapped onto the common framework with the following rules, applied identically to all systems.
\begin{itemize}
  \item \textbf{Common geometry.} All systems use the same impact radius $R$, decision ring, acceptance distance $D_{\mathrm{acc}}$ and report-age limit; the zone radii of the original papers are replaced by this geometry. The decision instant follows each paper: HTEMD and TCEMD decide once on entering the decision area with the reports collected until then, and decide on the first report if none was available at entry.
  \item \textbf{Common evidence.} Where a paper derives trust from communication success (HTEMD), which carries no information under an ideal channel, the agreement of a reported state with the judge's own perception is used instead, which is the evidence used by IPEK and TCEMD.
  \item \textbf{No roles.} All vehicles are equal. Authority classes (TCEMD) and government nodes (AR-NDA) are not modelled. All vehicles start from the same initial trust within each system: $GT^{0}$ in CREST and TCEMD, which equals the medium-authority value of TCEMD, and the prescribed value $GT^{0}_{\mathrm{H}}$ in HTEMD. This removes an advantage that both systems have in their own designs; since no other system receives role information, the comparison remains symmetric.
  \item \textbf{Time scales.} Update periods and revocation windows longer than the simulated time are mapped onto it: TCEMD updates trust every round $\Delta$ instead of with its own, longer period, and its negative-reporter count is accumulated over the whole run.
  \item \textbf{Randomised thresholds.} HTEMD draws its decision threshold at random; a fixed threshold $\tau_{\mathrm{H}}$ equal to the mean of that distribution is used so that runs remain paired across systems. The relaying threshold $T_p$ of TCEMD, drawn once per vehicle, is kept, since it matters only in the relaying sensitivity analysis.
\end{itemize}
A complete list of the mapping decisions for each system, together with the reasons, is provided in the supplementary material (Sections S1--S4).

\subsection{Fairness protocol}\label{sec:method:fair}

Parameters that a paper specifies are used as specified. Parameters that a paper leaves unspecified were tuned for every system with the same procedure: tuning seeds 10--19, the reference cell D-Sev at 25\% attackers, a criterion fixed in advance, extension of the grid whenever the optimum lies on its boundary, and a tie rule that prefers the published or default values. The selected values were frozen before any reporting seed was run. The procedure was applied as follows.
\begin{itemize}
  \item \textbf{HTEMD:} the three unspecified trust parameters $\alpha_{\mathrm{H}}$, $\lambda_{\mathrm{H}}$ and $k_{\mathrm{H}}$ were tuned for the open-loop separation of attackers and honest vehicles. With the initial guesses (Table~\ref{tab:params-baselines}) the separation on the tuning seeds was considerably lower than with the tuned values (AUC 0.627 against 0.848).
  \item \textbf{IPEK:} three parameters, matched in number and role to those of HTEMD, were tuned with the same criterion: $\kappa$, $d_0$ and $\rho$ (Section~\ref{sec:system:trust}).
  \item \textbf{CREST:} the detection threshold $\theta$ was selected for the closed-loop F1 on the tuning seeds, and the consensus variant was selected under a protection rule that forbids losses at the two lowest attacker ratios of $\mathcal{A}_3$.
  \item \textbf{TCEMD and AR-NDA:} all numerical parameters are given in the papers, so no tuning was required.
\end{itemize}
Tuning the unspecified parameters of the competitors works in their favour; with the initial guesses HTEMD performs worse than reported here. The reporting seeds, 20--39 for the base scenario and 20--29 for the open-loop, dense and geometry experiments, were not used in any earlier stage.

\paragraph{Discrepancy with the HTEMD paper}
In the HTEMD paper the trust of malicious vehicles falls to about the revocation threshold $\theta_{\mathrm{H}}$. In the present implementation, which follows the published equations, the trust of attackers does not fall that far, and with the published threshold HTEMD revokes practically no attacker while revoking honest vehicles (Section~\ref{sec:results}). The behaviour of the implementation is reported as measured, and the discrepancy is noted here rather than corrected by changing the published threshold.

\subsection{Metrics}\label{sec:method:metrics}

Metrics are defined at three levels.
\begin{itemize}
  \item \textbf{Trust level.} The \emph{open-loop} runs disable revocation and score each vehicle by $s_i = \min_k \max(GT_i(k), \dots, GT_i(k + N_r - 1))$ over its rounds; the AUC is the probability that an attacker scores lower than an honest vehicle. In the \emph{closed-loop} runs revocation is active. A true positive is a revoked vehicle that has attacked at least once, a false positive a revoked honest vehicle. The F1 score combines precision and recall over the vehicles that have attacked; the honest false-revocation rate is the share of honest vehicles revoked; the detection delay is the time from an attacker's first attack to its revocation.
  \item \textbf{Decision level.} Every passage of a vehicle through the decision ring of an event activation is an \emph{encounter}. The decision error is the share of firm false alarms and firm missed events among informed encounters, i.e.\ firm and unsure ones; uninformed encounters are excluded, since no decision rule can act without reports. The unsure rate is the share of unsure encounters among informed ones.
  \item \textbf{Application level.} A firm \emph{active} decision triggers a reroute. The \emph{missed-reroute rate} is the share of encounters with an active event that did not lead to a reroute, including unsure and uninformed encounters; the \emph{unnecessary-reroute rate} is the share of encounters with an inactive event that led to a reroute. For CREST+T and CREST+I, unsure encounters are resolved by the respective policy; for systems without abstention the policies coincide with the plain rates. The \emph{decision lead} is the distance $d - R$ of the vehicle from the impact radius at a firm active decision.
\end{itemize}
The decision error alone favours abstention, because an unsure encounter is not counted as an error. For this reason it is reported together with the unsure rate and the application-level rates, which count the cost of abstention.

\subsection{Experimental design and statistics}\label{sec:method:stats}

A \emph{cell} is a combination of attack and attacker ratio. The base-scenario closed-loop comparison of Group~1 covers the fourteen attacks at every ratio of $\mathcal{A}$ and an attack-free control, i.e.\ 71 cells with 20 seeds each. Groups~2 and~3 use the ratios of $\mathcal{A}_3$ and the control (43 cells, 20 seeds). The open-loop comparison uses the 71 cells and the dense scenario the 43 cells, both with 10 seeds. The geometry analysis uses six attacks (D-Sev, R-Sev, D-OnOff, Fab, Spoof, Camo-D) at the ratios of $\mathcal{A}_3$ and the control (19 cells, 10 seeds).

Every comparison is between CREST and one other arm on the same seeds. In each cell, the paired two-sided Wilcoxon signed-rank test is applied, and the Holm correction is applied within each family of one arm pair and one metric across the cells, at the 5\% significance level. Results are reported as the number of cells in which CREST is significantly better, not significantly different, or significantly worse, together with the cell means.

With 10 seeds the smallest attainable two-sided $p$ value of the exact test is $0.002$, and after Holm correction over 43 or more cells no cell can reach significance. This limitation of the registered protocol was identified after the runs. For the 10-seed experiments two additional tests are therefore reported and labelled as not pre-registered: a seed-level test, which averages each seed over the cells and applies one Wilcoxon test to the ten paired means, and a cell-level sign test on the cell means. The 20-seed experiments are not affected.

\section{Results}\label{sec:results}

All values are means over seeds and then over cells. Counts of the form $b/e/w$ give the number of cells in which CREST is significantly better ($b$), not significantly different ($e$) or significantly worse ($w$) than the compared arm (Section~\ref{sec:method:stats}).

\subsection{Comparison of complete systems}\label{sec:results:systems}

Table~\ref{tab:systems} summarises the comparison of CREST with HTEMD and TCEMD in the base scenario.

\begin{table}[t]
\centering
\caption{Complete systems, base scenario (closed loop: 71 cells including the control, 20 seeds; open loop: 70 attack cells, 10 seeds). Rates are fractions of encounters; Missed and Unnecessary denote the missed-reroute and unnecessary-reroute rates under the stated policy; the honest false-revocation rate is in percent.}
\label{tab:systems}
\footnotesize
\setlength{\tabcolsep}{4pt}
\begin{tabular}{@{}lccccc@{}}
\toprule
Metric & CREST & HTEMD & TCEMD & vs HTEMD & vs TCEMD \\
\midrule
\multicolumn{6}{@{}l}{\emph{Trust level}} \\
AUC, open loop$^{*}$ & \textbf{0.983} & 0.905 & 0.788 & 70/0/0 & 70/0/0 \\
F1, closed loop & \textbf{0.875} & 0.000 & 0.346 & 70/0/0 & 70/0/0 \\
Honest false revocation (\%) & \textbf{1.3} & 18.3 & 7.8 & 24/47/0 & 71/0/0 \\
Detection delay (s) & \textbf{288} & --$^{**}$ & 624 & -- & 68/2/0 \\
\midrule
\multicolumn{6}{@{}l}{\emph{Decision level}} \\
Decision error & \textbf{0.051} & 0.162 & 0.106 & 71/0/0 & 70/1/0 \\
Unsure rate & 0.057 & 0 & 0 & -- & -- \\
\midrule
\multicolumn{6}{@{}l}{\emph{Application level}} \\
Missed, no action & 0.142 & 0.210 & \textbf{0.135} & 47/7/17 & 8/18/45 \\
Missed, CREST+T & \textbf{0.100} & 0.210 & 0.135 & 69/2/0 & 55/11/5 \\
Unnecessary, CREST+T & \textbf{0.083} & 0.107 & 0.133 & 48/20/3 & 69/1/1 \\
Missed, CREST+I & \textbf{0.088} & 0.210 & 0.135 & 71/0/0 & 63/6/2 \\
Unnecessary, CREST+I & 0.126 & \textbf{0.107} & 0.133 & 5/29/37 & 28/35/8 \\
\bottomrule
\end{tabular}

\smallskip
\begin{minipage}{0.92\linewidth}\footnotesize
$^{*}$ 10 seeds; counts from the cell-level sign test, seed-level $p = 0.002$ against both (not pre-registered).
$^{**}$ HTEMD revokes practically no attacker; the delay is undefined.
\end{minipage}
\end{table}

\paragraph{Trust level}
CREST separates attackers from honest vehicles better than both competitors in every attack cell, in the open loop (AUC 0.983 against 0.905 and 0.788) and in the closed loop. With its published revocation threshold $\theta_{\mathrm{H}}$, HTEMD revokes practically no attacker (F1 at most 0.003 in every cell). Its honest false revocations are confined to the attacks with a feedback channel: they are zero in the control and under every report-only attack, but reach 18--95\% at 25\% attackers and 80--100\% at 40\% attackers under D-Sev, D-Loc, D-OnOff and D-Cont. This pattern is consistent with the trust update of HTEMD, which replaces the trust of a target by the mean of the feedback that passes the outlier filter, without history: colluding feedback that forms a local majority passes the filter and drives honest targets below the threshold. TCEMD revokes some attackers (F1 0.346) but also 7.8\% of the honest vehicles on average, including 5.5\% in the attack-free control; its global trust is recomputed in every round without history, so that a single round of negative feedback can push an honest vehicle below the threshold. CREST revokes 1.3\% of the honest vehicles on average and 0.3\% in the control; its highest false-revocation rate, 20.7\%, occurs under D-Cont at 40\% attackers, where it is still lower than that of HTEMD (100\%) and TCEMD (47.8\%).

\paragraph{Decision level}
The decision error of CREST is one third of that of HTEMD and one half of that of TCEMD (0.051 against 0.162 and 0.106) and lower in 71 and 70 of the 71 cells. The advantage grows with the attacker ratio (Fig.~\ref{fig:errratio}): under D-Cont, for example, the decision error of CREST rises from 0.043 at 25\% to 0.142 at 40\% attackers, that of TCEMD from 0.145 to 0.482 and that of HTEMD from 0.413 to 0.715. In the control, where no vehicle lies, CREST still halves the error (0.019 against 0.038 and 0.039). Here the only source of wrong reports is the perception error, which grows with the distance of the witness from the event; the proximity weight discounts exactly these reports, and the abstention threshold defers decisions that rest on them. CREST is unsure in 5.7\% of the informed encounters; HTEMD and TCEMD never abstain.

\begin{figure}[t]
\centering
\includegraphics[width=\linewidth]{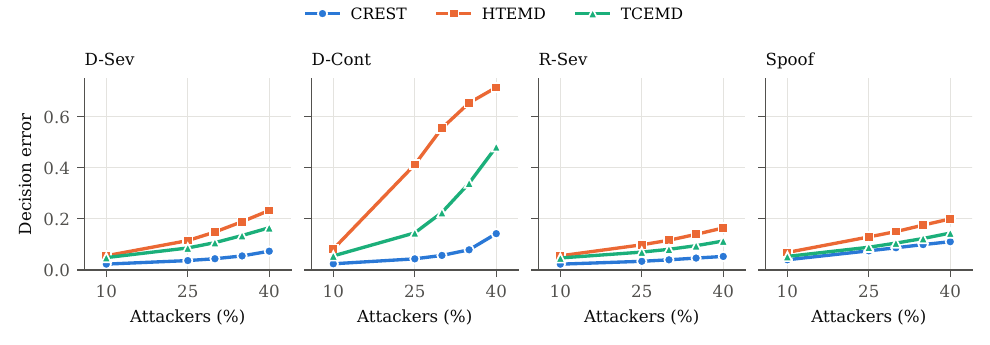}
\caption{Decision error as a function of the attacker ratio for four attacks (base scenario, 20 seeds; attacker ratios of $\mathcal{A}$). Plotted directly from the simulation results with a script prepared with the assistance of an AI tool (see the declaration of generative AI use).}
\label{fig:errratio}
\end{figure}

\paragraph{Application level}
The application-level rates show what abstention costs and what it buys. If unsure vehicles take no action, CREST produces the fewest unnecessary reroutes (0.067) but misses more hazards than TCEMD (0.142 against 0.135, worse in 45 cells): a vehicle that is still unsure at the event and keeps its route misses the event whenever it is active. Resolving the unsure encounters removes this cost. With the best-guess policy CREST+T, CREST misses fewer hazards than both competitors (0.100 against 0.210 and 0.135) and also produces fewer unnecessary reroutes (0.083 against 0.107 and 0.133); against TCEMD it is better in 55 cells and worse in 5 for missed reroutes and better in 69 cells for unnecessary ones. The cautious policy CREST+I lowers the missed-reroute rate further, to 0.088, at the price of more unnecessary reroutes (0.126); this is the expected trade-off of a policy that treats every unresolved hazard as present.

\paragraph{Attack families}
Table~\ref{tab:attacks25} breaks the comparison down by attack at 25\% attackers. Three families deserve attention.
\begin{itemize}
  \item \textbf{Position spoofing} is the only family in which CREST+T misses more hazards than TCEMD in some cells (Spoof at 25\%: 0.127 against 0.116; Spoof-OnOff at 25\%: 0.157 against 0.135). Spoofed reports claim the most reliable position and therefore receive a high proximity weight in CREST, whereas TCEMD does not use proximity at all. The decision error of CREST nevertheless remains lower than that of TCEMD in both attacks, significantly so for Spoof. The optional plausibility check addresses exactly this weakness (Section~\ref{sec:results:sens}).
  \item \textbf{Fabrication} produces the highest unnecessary-reroute rates of all attacks, for every system (CREST+T 0.148, HTEMD 0.251, TCEMD 0.253 at 25\%). Its on-off variant is also the hardest to detect for CREST (F1 0.530), because the attacker fabricates only during half of the time and only on passive events.
  \item \textbf{Trust-gated camouflage} lowers the F1 of CREST from 0.951 (D-Sev) to 0.560 (Camo-D) and from 0.964 (R-Sev) to 0.618 (Camo-R) at 25\% attackers: an attacker that knows the threshold and attacks only while its trust is high enough is revoked less often. The decision layer, however, is not deceived in the same way. The decision error under camouflage (0.032--0.036) is no higher than under the corresponding non-camouflaged attacks, since an attacker that keeps its trust above the threshold has to behave honestly most of the time.
\end{itemize}

\begin{table}[t]
\centering
\caption{Per-attack results at 25\% attackers (base scenario, 20 seeds). HTEMD has F1 $\le 0.003$ in every attack and is omitted from the F1 columns. FPR: honest false-revocation rate in percent.}
\label{tab:attacks25}
\scriptsize
\setlength{\tabcolsep}{3.5pt}
\begin{tabular}{@{}lcc@{\hspace{1em}}ccc@{\hspace{1em}}ccc@{}}
\toprule
& \multicolumn{2}{c}{F1} & \multicolumn{3}{c}{FPR (\%)} & \multicolumn{3}{c}{Decision error} \\
\cmidrule(r){2-3}\cmidrule(lr){4-6}\cmidrule(l){7-9}
Attack & CREST & TCEMD & CREST & HTEMD & TCEMD & CREST & HTEMD & TCEMD \\
\midrule
D-Sev       & 0.951 & 0.163 & 0.5 & 23.3 & 9.2  & 0.037 & 0.115 & 0.086 \\
D-Loc       & 0.979 & 0.177 & 0.6 & 33.9 & 10.4 & 0.035 & 0.141 & 0.099 \\
D-OnOff     & 0.964 & 0.182 & 0.9 & 18.0 & 9.3  & 0.037 & 0.115 & 0.087 \\
D-Cont      & 0.979 & 0.245 & 1.6 & 95.2 & 13.5 & 0.043 & 0.413 & 0.145 \\
R-Sev       & 0.964 & 0.376 & 0.3 & 0.0  & 5.4  & 0.034 & 0.098 & 0.070 \\
R-Loc       & 0.986 & 0.492 & 0.3 & 0.0  & 5.1  & 0.033 & 0.115 & 0.075 \\
R-OnOff     & 0.975 & 0.380 & 0.3 & 0.0  & 5.5  & 0.034 & 0.102 & 0.074 \\
R-Cont      & 0.996 & 0.808 & 0.3 & 0.0  & 3.3  & 0.040 & 0.173 & 0.086 \\
Fab         & 0.912 & 0.454 & 0.3 & 0.0  & 4.6  & 0.032 & 0.068 & 0.065 \\
Fab-OnOff   & 0.530 & 0.237 & 0.3 & 0.0  & 5.6  & 0.027 & 0.052 & 0.053 \\
Spoof       & 0.990 & 0.425 & 0.2 & 0.0  & 5.3  & 0.075 & 0.129 & 0.089 \\
Spoof-OnOff & 0.996 & 0.538 & 0.3 & 0.0  & 4.8  & 0.097 & 0.161 & 0.104 \\
Camo-R      & 0.618 & 0.353 & 0.3 & 0.0  & 5.4  & 0.032 & 0.098 & 0.070 \\
Camo-D      & 0.560 & 0.157 & 0.6 & 22.5 & 9.2  & 0.036 & 0.114 & 0.085 \\
\midrule
Control     & --    & --    & 0.3 & 0.0  & 5.5  & 0.019 & 0.038 & 0.039 \\
\bottomrule
\end{tabular}
\end{table}

\subsection{Decision rules on a common trust substrate}\label{sec:results:rules}

Table~\ref{tab:rules} compares the decision rules on the trust substrate of CREST. Since the trust layer, the revocation rule and the threshold are identical in all arms, the trust-level results coincide and only decision and application metrics are reported.

\begin{table}[t]
\centering
\caption{Decision rules on the CREST trust substrate (base scenario, 43 cells, 20 seeds). Missed and Unnecessary: missed-reroute and unnecessary-reroute rates under the stated policy, which affects only abstaining rules. Counts: CREST better / not different / worse.}
\label{tab:rules}
\footnotesize
\setlength{\tabcolsep}{4pt}
\begin{tabular}{@{}lccccc@{}}
\toprule
Metric & CREST & Majority & Forced & Unweighted & AR-NDA \\
\midrule
Decision error & \textbf{0.047} & 0.062 & 0.072 & 0.057 & 0.119 \\
 & & 37/2/4 & 43/0/0 & 37/0/6 & 43/0/0 \\
Unsure rate & 0.055 & 0.014 & 0 & 0.033 & 0 \\
Missed, CREST+T & 0.095 & \textbf{0.088} & 0.106 & 0.094 & 0.160 \\
 & & 4/14/25 & 43/0/0 & 12/23/8 & 43/0/0 \\
Unnecessary, CREST+T & \textbf{0.080} & 0.133 & 0.097 & 0.100 & 0.102 \\
 & & 43/0/0 & 43/0/0 & 13/30/0 & 39/1/3 \\
Missed, CREST+I & \textbf{0.084} & 0.088 & 0.106 & 0.094 & 0.160 \\
 & & 18/18/7 & 43/0/0 & 37/0/6 & 43/0/0 \\
Unnecessary, CREST+I & 0.121 & 0.136 & \textbf{0.097} & 0.100 & 0.102 \\
 & & 18/23/2 & 0/0/43 & 4/2/37 & 0/31/12 \\
Decision lead (m) & 59.2 & 59.8 & 60.2 & 58.7 & \textbf{60.8} \\
\bottomrule
\end{tabular}
\end{table}

The abstaining evidential rule has the lowest decision error of all rules. Against the forced binary rule, which uses the same fusion but never abstains, it is better in all 43 cells, both in decision error (0.047 against 0.072) and, with the CREST+T policy, in both application rates: fewer missed reroutes (0.095 against 0.106) and fewer unnecessary ones (0.080 against 0.097). Since the forced rule differs from CREST only by $\tau = 0$, this comparison isolates the effect of abstention: deferring a weak decision and acting on the evidence available later is better on both error types than acting on the evidence available at once. Pooled over all encounters, the forced rule misses 100.1 and unnecessarily reroutes 94.8 per 1000 encounters, CREST+T 96.1 and 81.0 (Fig.~\ref{fig:policies}).

The majority rule reroutes unnecessarily far more often than CREST+T (0.133 against 0.080, all 43 cells), but misses slightly fewer hazards (0.088 against 0.095). An unweighted vote counts every report equally, so near-ties are rare and the majority rule decides firmly more often; it ignores, however, which reporters are trusted and near the event. The unweighted variant of the CREST rule, which keeps the fusion and abstention but drops the weights, has a higher error (0.057) and more unnecessary reroutes (0.100) than CREST; the weights matter in addition to abstention. AR-NDA has the highest decision error (0.119) and the highest missed-reroute rate (0.160). Over all its decisions, 52\% were taken directly from the belief difference, 41\% by the final weighted comparison of the inference stage and 7\% by the single-report rule; the vote-count step was never reached. Lowering the vote threshold $n_0$ does not change this (Section~\ref{sec:results:sens}). Like the forced rule, AR-NDA always commits to a decision once a report is available and cannot defer a decision that rests on thin evidence.

The decision lead of CREST is about 1\,m shorter than that of the forced rules: a deferring vehicle decides slightly later, i.e.\ closer to the event. Relative to the width of the decision ring, $(\beta - 1)R$, this cost is small.

\begin{figure}[t]
\centering
\includegraphics[width=\linewidth]{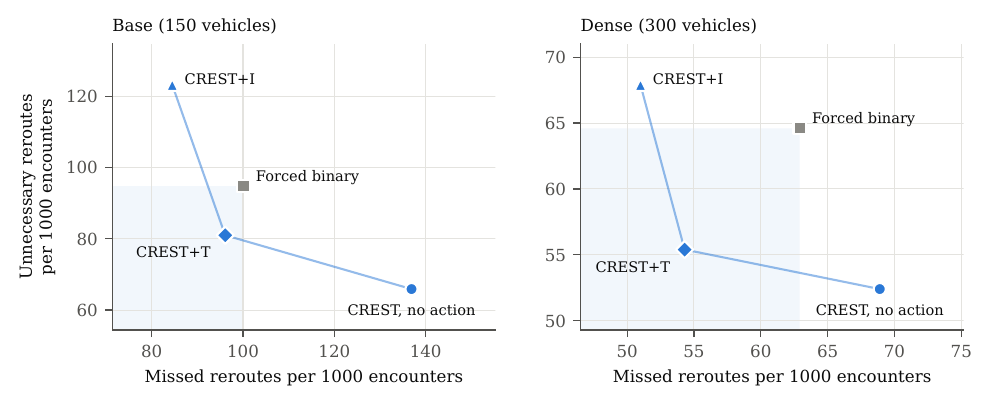}
\caption{Application outcome of the three CREST policies and of the forced binary rule, pooled over all cells and seeds. The three CREST policies share all firm decisions and differ only in the treatment of unsure encounters. The shaded area marks outcomes that are better than the forced rule on both axes. Plotted directly from the simulation results with a script prepared with the assistance of an AI tool (see the declaration of generative AI use).}
\label{fig:policies}
\end{figure}

Fig.~\ref{fig:policies} shows the trade-off between the three policies. No action minimises unnecessary reroutes (65.9 per 1000 encounters in the base scenario) and pays for it with missed hazards: 136.9 per 1000 encounters, i.e.\ 36.8 more than the forced rule. CREST+T removes this cost and is the only policy that improves on the forced rule on both axes (96.1 and 81.0 against 100.1 and 94.8). CREST+I misses the fewest hazards (84.5) at the highest rate of unnecessary reroutes (123.2) and is the choice when missed hazards are much more costly than detours. The trade-off is milder in dense traffic, where abstention is rarer (1.8\% against 5.4\% unsure encounters) and the cost of no action shrinks to 6.0 missed hazards per 1000 encounters, as predicted by Proposition~\ref{prop:abstain}.

\subsection{Ablation of the substrate updates}\label{sec:results:ablation}

Replacing the credibility-weighted median by the plain median of IPEK lowers the mean F1 from 0.874 to 0.837 and raises the honest false-revocation rate from 1.4\% to 4.4\% (43 cells, 20 seeds). The difference is concentrated in the dual-channel attacks at 40\% attackers, where local attacker majorities occur: under D-OnOff the F1 falls from 0.922 to 0.475, under D-Cont from 0.857 to 0.359. CREST is significantly better in 5 cells for F1 and 7 cells for decision error and never significantly worse. This is the breakdown predicted by Proposition~\ref{prop:wmedian}: the plain median fails once attackers form a local majority, and the credibility weighting moves the breakdown point beyond it.

Replacing the time rule by the round rule with $N_r$ rounds yields identical results in every run, as predicted by Proposition~\ref{prop:revoke} for $W = (N_r - 1)\Delta$. The time rule therefore does not change the behaviour at the operating point; its benefit is the independence of the revocation delay from the update tempo, which a fixed-tempo evaluation cannot show.

\subsection{Dense scenario}\label{sec:results:dense}

Table~\ref{tab:dense} repeats the main comparison with 300 vehicles, with all parameters and thresholds transferred unchanged from the base scenario.

\begin{table}[t]
\centering
\caption{Dense scenario (300 vehicles; 43 cells, 10 seeds). Counts: cells in which the CREST mean is better / worse (sign test); seed-level $p = 0.002$ for every count shown (not pre-registered).}
\label{tab:dense}
\footnotesize
\begin{tabular}{@{}lccccc@{}}
\toprule
Metric & CREST & HTEMD & TCEMD & Forced & AR-NDA \\
\midrule
AUC, open loop & \textbf{0.983} & 0.933 & 0.803 & -- & -- \\
 & & 42/0 & 42/0 & & \\
F1, closed loop & \textbf{0.889} & 0.000 & 0.351 & 0.889 & 0.889 \\
Honest false revocation (\%) & \textbf{0.8} & 10.5 & 14.7 & 0.8 & 0.8 \\
Decision error & \textbf{0.034} & 0.099 & 0.069 & 0.047 & 0.075 \\
 & & 43/0 & 40/3 & 43/0 & 43/0 \\
Missed, CREST+T & \textbf{0.058} & 0.123 & 0.084 & 0.067 & 0.098 \\
 & & 43/0 & 37/6 & 43/0 & 42/1 \\
Unnecessary, CREST+T & \textbf{0.057} & 0.090 & 0.104 & 0.066 & 0.076 \\
 & & 42/1 & 42/1 & 43/0 & 41/2 \\
\bottomrule
\end{tabular}
\end{table}

All conclusions of the base scenario hold in dense traffic. CREST has the lowest decision error and, with CREST+T, the lowest rates of both missed and unnecessary reroutes. Against the forced binary rule it has lower cell means in all 43 cells on both application rates (seed-level $p = 0.002$), which supports the pre-registered hypothesis that the benefit of abstention persists when evidence is abundant. As explained in Section~\ref{sec:method:stats}, the registered cell-wise test cannot reach significance with 10 seeds, and this support rests on the additional tests. More reports per decision reduce the decision error of every system, and the gap between CREST and the forced rule narrows (0.034 against 0.047), since abstention is needed less often. The honest false-revocation rate of CREST is lower than that of HTEMD on average (0.8\% against 10.5\%), but not in every cell: as in the base scenario, HTEMD revokes no honest vehicle under report-only attacks, so that its cell mean is lower in 28 of the 43 cells, while its losses under the dual-channel attacks dominate the average.

\subsection{Sensitivity analyses}\label{sec:results:sens}

\paragraph{Decision geometry}
Table~\ref{tab:geometry} varies the decision ring and relaying. Enlarging the ring ratio from $\beta$ to $\beta_{\mathrm{L}}$ lets vehicles decide about three times earlier (decision lead 168--184\,m instead of 59--60\,m) but with less evidence, which raises the decision error and the missed-reroute rate of every system. Two-hop relaying reduces missed reroutes, because fewer vehicles remain uninformed, and increases unnecessary reroutes, because false reports are relayed as well. Relaying over $h_{\max}$ hops in the enlarged ring, for example, lowers the missed-reroute rate of CREST+T from 0.120 to 0.091 and raises its unnecessary-reroute rate from 0.095 to 0.134. The ranking of the systems does not depend on the geometry. In all four configurations CREST has the lowest decision error in all 19 cells against every arm, and lower missed and unnecessary reroute rates than the forced rule in all 19 cells. Against HTEMD and TCEMD the application rates of CREST are significantly lower in 9 to 19 cells, depending on configuration and metric, and not significantly different in the others; the only cells in which CREST is significantly worse are one cell of the missed-reroute rate against TCEMD in each configuration with the main ring ratio.

\begin{table}[t]
\centering
\caption{Decision geometry (19 cells, 10 seeds). $\beta$, $\beta_{\mathrm{L}}$: main and enlarged ring ratio; $h$: relaying hops (Table~\ref{tab:params}). Application rates with the CREST+T policy for CREST.}
\label{tab:geometry}
\footnotesize
\begin{tabular}{@{}llcccc@{}}
\toprule
Geometry & Metric & CREST & Forced & HTEMD & TCEMD \\
\midrule
\multirow{4}{*}{$\beta$, $h = 0$} & Decision error & \textbf{0.044} & 0.066 & 0.114 & 0.086 \\
 & Missed reroutes & \textbf{0.086} & 0.096 & 0.146 & 0.107 \\
 & Unnecessary reroutes & \textbf{0.084} & 0.099 & 0.125 & 0.143 \\
 & Decision lead (m) & 59 & 60 & 60 & 60 \\
\midrule
\multirow{4}{*}{$\beta_{\mathrm{L}}$, $h = 0$} & Decision error & \textbf{0.065} & 0.096 & 0.158 & 0.124 \\
 & Missed reroutes & \textbf{0.120} & 0.140 & 0.207 & 0.164 \\
 & Unnecessary reroutes & \textbf{0.095} & 0.121 & 0.122 & 0.141 \\
 & Decision lead (m) & 168 & 176 & 176 & 177 \\
\midrule
\multirow{4}{*}{$\beta_{\mathrm{L}}$, $h = h_{\max}$} & Decision error & \textbf{0.066} & 0.094 & 0.153 & 0.121 \\
 & Missed reroutes & \textbf{0.091} & 0.108 & 0.173 & 0.130 \\
 & Unnecessary reroutes & \textbf{0.134} & 0.170 & 0.186 & 0.216 \\
 & Decision lead (m) & 180 & 184 & 184 & 184 \\
\midrule
\multirow{4}{*}{$\beta$, $h = h_{\max}$} & Decision error & \textbf{0.045} & 0.065 & 0.118 & 0.095 \\
 & Missed reroutes & \textbf{0.077} & 0.085 & 0.131 & 0.095 \\
 & Unnecessary reroutes & \textbf{0.105} & 0.124 & 0.200 & 0.229 \\
 & Decision lead (m) & 59 & 60 & 60 & 60 \\
\bottomrule
\end{tabular}
\end{table}

\paragraph{Camouflage margin}
A larger margin makes the camouflaged attacker more cautious. With the smaller margin $\delta^{-}$ the F1 of CREST is 0.705 (Camo-R) and 0.644 (Camo-D) at 25\% attackers; with the larger margin $\delta^{+}$ it falls to 0.427 and 0.365. The decision error stays between 0.029 and 0.036 for all margins: the more cautious the attacker, the more it behaves honestly and the less it harms the decision.

\paragraph{AR-NDA vote threshold}
With the two lowered values of $n_0$ (Table~\ref{tab:params-baselines}) the decision error of AR-NDA under D-Sev, R-Sev and Spoof at 25\% attackers is identical to that with the published value (0.110, 0.107 and 0.129). The weak performance of AR-NDA in the present scenario is therefore not caused by the choice of $n_0$.

\paragraph{Plausibility check}
Under the two spoofing attacks, the optional plausibility check lowers the decision error of CREST from 0.087 to 0.082, with a significant reduction in all six cells, and leaves the F1 practically unchanged (0.989 against 0.986).

\section{Discussion and limitations}\label{sec:discussion}

\subsection{What abstention buys and what it costs}\label{sec:discussion:abstention}

The comparison with the forced binary rule, which uses the same fusion and differs only by the abstention threshold, isolates the effect of abstention. Its benefit does not come from refusing to act at the end: a vehicle that stays unsure and takes no action misses more hazards than a forced rule (Fig.~\ref{fig:policies}). The benefit comes from deferring. A vehicle that waits for more evidence takes a firm decision later and on more reports, and this decision is more often correct; once the remaining unsure encounters are resolved by the best-guess policy, CREST+T is better than the forced rule on both error types in every cell. Abstention should therefore be understood as a waiting state with a resolution policy, not as a third answer.

Two consequences follow for the interpretation of the metrics. The decision error counts only firm false alarms and firm misses, so it favours any rule that abstains; it has to be read together with the unsure rate and the application-level rates, which count unsure encounters as missed hazards or as detours. The application policies make the remaining trade-off explicit: no action minimises detours, the cautious policy minimises missed hazards, and the best-guess policy dominates the forced rule. The choice among them depends on the relative cost of a missed hazard and a detour in the application, which the framework leaves to the application instead of fixing it implicitly. Finally, the cost of abstention depends on the evidence available. In dense traffic vehicles hold more reports and abstain less often, so that the cost of the no-action policy shrinks considerably (Section~\ref{sec:results:rules}), as predicted by Proposition~\ref{prop:abstain}.

\subsection{Remaining attack surface}\label{sec:discussion:attacks}

\paragraph{Trust-gated camouflage}
An attacker that knows the detection threshold and attacks only while its trust remains above it is revoked much less often, and the effect grows with the margin it keeps (Section~\ref{sec:results:sens}). This is a limit of any threshold-based revocation and is not specific to CREST. The decision layer is affected less: to stay above the threshold, the attacker has to behave honestly most of the time, and the decision error under camouflage is no higher than under the corresponding non-camouflaged attacks. Detection and decision quality can therefore diverge, which is a further reason to evaluate both.

\paragraph{Position spoofing}
Spoofed reports exploit the proximity weight of the decision rule. Position spoofing is the only attack family in which CREST+T misses more hazards than TCEMD in some cells, since TCEMD ignores proximity. The optional plausibility check reduces the decision error under spoofing in every cell, but the reduction is small, and forwarded reports cannot be checked because the receiver has no signal from the original sender. A reliable defence requires position verification that is independent of the reporter, for example by neighbouring vehicles or by the infrastructure~\cite{vanderheijden2019survey}.

\paragraph{Local attacker majorities}
The credibility-weighted consensus moves the breakdown point of the credibility test beyond one half of the reporters (Proposition~\ref{prop:wmedian}) but cannot remove it. The highest honest false-revocation rate of CREST occurs under continuous dual-channel attacks at the highest attacker ratio, where attackers dominate the feedback about many targets; it remains well below that of the competing systems, but it marks the regime in which the substrate, and with it the decision layer, loses its protection.

\subsection{Negative results}\label{sec:discussion:negative}

Several mechanisms were designed, registered and tested in the development stage, on seeds disjoint from the reporting seeds of Section~\ref{sec:method:fair}, and were not adopted. They are reported because they delimit what the framework can achieve.
\begin{itemize}
  \item \textbf{Adaptive update tempo.} Shortening the round length when the threat level rises lowered the mean F1 from 0.939 to 0.905, raised the honest false-revocation rate from 4.2\% to 6.8\% and multiplied the number of rounds by 2.5. Faster rounds add noise to the global trust, and slower rounds delay its convergence; Proposition~\ref{prop:revoke} makes the revocation rule tempo-independent but not the trust update itself. A variant that only slows down when the threat level is low did not pass its registered protection test at any tempo bound.
  \item \textbf{Vehicle-side conflict signal.} The conflict between the active and the inactive evidence in the decisions of a round separated rounds under attack from rounds without attack (AUC 0.851), but less well than the threat score already available at the CA (0.907), and combining the two did not improve on the threat score. The signal was therefore dropped.
  \item \textbf{Distribution-aware threshold.} A detection threshold computed in every round as a robust lower quantile of the global trust distribution collapsed at high attacker ratios: the attackers shift the median and inflate the dispersion of the distribution they are measured against, so that the threshold fell below the attackers' trust (mean threshold 0.115 and F1 0.077 in the worst cell, against 0.716 for the fixed threshold). The calibration is self-referential; its failure follows analytically once the attacker share exceeds a value that decreases with the spread of the honest trust values.
  \item \textbf{Penalising plausibility check.} A plausibility check that also penalised the sender of every rejected report reduced the decision error under spoofing, but noise-induced rejections of honest reports lowered honest trust and caused false revocations; it failed its registered protection test. The reject-only variant of Section~\ref{sec:crest:plaus} was designed after this result and is labelled as such.
  \item \textbf{Trust-weighted consensus.} Weighting the consensus by credibility multiplied by global trust, instead of credibility alone, performed worse and lost up to 0.145 AUC under continuous dual-channel attacks at the highest attacker ratio. At high attacker shares the global trust is itself inverted, attackers being more trusted than honest vehicles, while the credibility still separates them; the product carries the inverted trust into the consensus.
\end{itemize}

\subsection{Limitations}\label{sec:discussion:limits}

The evaluation has the following limitations.
\begin{itemize}
  \item \textbf{Channel and infrastructure.} The channel is ideal, and one CA with full coverage distributes the trust values. Packet loss, delay and partial RSU coverage would reduce the evidence available to a decider and would increase abstention; the size of this effect is not quantified here.
  \item \textbf{Scenarios.} One urban grid is evaluated at two densities with one event model. Highway scenarios, other event distributions and real traces remain to be tested.
  \item \textbf{Trust substrate.} CREST requires only a scalar trust per reporter, but it was evaluated on IPEK only. Its behaviour on other trust substrates has not been measured, and the analysis of Section~\ref{sec:analysis} suggests that the robustness of the decision depends on how well the substrate keeps the attackers' trust low.
  \item \textbf{Competing systems.} HTEMD and TCEMD were mapped onto a common geometry and evidence model, and their unspecified parameters were tuned; all mapping decisions are documented in the supplementary material. The present HTEMD implementation does not reproduce the low attacker trust reported in its paper (Section~\ref{sec:method:fair}).
  \item \textbf{Statistics.} For the 10-seed experiments the registered cell-wise test cannot reach significance; the reported significance relies on additional tests that were not pre-registered (Section~\ref{sec:method:stats}).
  \item \textbf{Fixed parameters.} The abstention threshold is the same for all events. A threshold that grows with the criticality of the location is a natural extension but was not evaluated.
  \item \textbf{Threat model.} Sybil identities, external attackers and attacks on the infrastructure are out of scope, and the attackers' positions can only be checked for directly received reports.
\end{itemize}

\section{Conclusion}\label{sec:conclusion}

Trust management in vehicular networks has produced many ways of scoring vehicles and messages, while the decision that a driver actually needs, whether a reported hazard is there, has received far less attention, and the schemes that make this decision force it into two outcomes. This paper has presented CREST, a vehicle-side event-decision framework that weights the reports held by a vehicle by the trust, proximity and freshness of their senders, fuses them with Yager's rule, and decides only when the fused evidence is strong enough. Abstention is treated as a waiting state, and vehicles that remain unsure are resolved by explicit application policies. The analysis shows that a single report changes the decision margin by at most its weight, which ties the robustness of the decision to the trust that the substrate assigns to the attackers, and two updates of the IPEK substrate, a credibility-weighted median consensus and a time-based revocation rule, follow from this link.

Compared with the complete HTEMD and TCEMD systems under fourteen attacks, CREST has the lowest decision error (0.051 against 0.162 and 0.106), significantly lower than both systems in 70 of the 71 cells of the base scenario. With the best-guess policy, it also misses fewer hazards (0.100 against 0.210 and 0.135) and causes fewer unnecessary reroutes (0.083 against 0.107 and 0.133). On a common trust substrate, the abstaining rule is better than the forced binary rule on both error types in every cell, which shows that the benefit comes from deferring weak decisions rather than from refusing to act. The ranking of the systems holds in dense traffic and for all four decision geometries. The weaknesses that remain are explicit: trust-gated camouflage reduces detection, and position spoofing exploits the proximity weight, which the optional plausibility check only partly offsets.

Future work includes the evaluation of CREST on other trust substrates and with realistic channel models, a criticality-dependent abstention threshold, cost-based selection of the application policy, and position verification that does not rely on the reporters, so that the proximity weight can also be used safely for forwarded reports.

\section*{CRediT authorship contribution statement}
\textbf{Ipek Abas{\i}kele\c{s}-Turgut:} Conceptualization, Methodology, Software, Validation, Formal analysis, Investigation, Data curation, Writing -- original draft, Writing -- review and editing, Visualization, Project administration, Funding acquisition.

\section*{Declaration of competing interest}
The author declares no known competing financial interests or personal relationships that could have appeared to influence the work reported in this paper.

\section*{Funding}
This work was supported by the Scientific and Technological Research Council of T\"urkiye (T\"UB\.{I}TAK) [grant number 124E017]. The funding source had no involvement in the study design, in the collection, analysis and interpretation of data, in the writing of the report, or in the decision to submit the article for publication.

\section*{Data availability}
The simulation code, the configuration files, the analysis scripts and the processed results supporting the findings of this study are available from the corresponding author upon reasonable request. The raw simulation output is not deposited in a public repository because of its volume.

\section*{Declaration of generative AI and AI-assisted technologies in the manuscript preparation process}
During the preparation of this work the author used Claude (Anthropic) to draft and edit the manuscript text, to write the analysis and plotting scripts and the code of the schematic figures, and to check the consistency between the text, the code and the results. After using this tool, the author reviewed and edited the content as needed and takes full responsibility for the content of the published article.

\bibliographystyle{elsarticle-num}
\bibliography{refs}

\end{document}